\newtheorem{case}{Case}\newtheorem{observation}{Observation}
\newcommand{\passive}{$passive$}
\newcommand{\jump}{$jump$}
\theoremstyle{plain}
\newtheorem{theorem}{Theorem}
\newtheorem{lemma}[theorem]{Lemma}
\theoremstyle{definition}
\newtheorem{definition}[theorem]{Definition}
\newtheorem{remark}[theorem]{Remark}
\title{Optimal Dispersion of Silent Robots in a Ring}\date{}
\author[1]{\small Bibhuti Das \thanks{dasbibhuti905@gmail.com}}
\author[1]{\small Barun Gorain \thanks{barun@iitbhilai.ac.in}}
\author[2]{\small Kaushik Mondal \thanks{kaushik.mondal@iitrpr.ac.in}}
\author[3]{\small Krishnendu Mukhopadhyaya \thanks{krishnendu@isical.ac.in}}
\author[4]{\small Supantha Pandit \thanks{pantha.pandit@gmail.com}}
\affil[1]{\footnotesize Indian Institute of Technology Bhilai, India}
\affil[2]{\footnotesize Indian Institute of Technology Ropar, India}
\affil[2]{\footnotesize Indian Statistical Institute Kolkata, India }
\affil[2]{\footnotesize Dhirubhai Ambani Institute of Information and Communication Technology, Gandhinagar, India}
\begin{document}
	\maketitle
	
	\begin{abstract}
		Given a set of co-located mobile robots in an unknown anonymous graph, the robots must relocate themselves in distinct graph nodes to solve the {\it dispersion} problem. In this paper, we consider the {\it dispersion} problem for {\it silent} robots \cite{gorain2024collaborative}, i.e., no direct, explicit communication between any two robots placed in the nodes of an oriented $n$ node ring network. The robots operate in synchronous rounds. The {\it dispersion} problem for {\it silent} mobile robots has been studied in arbitrary graphs where the robots start from a single source. In this paper, we focus on the {\it dispersion} problem for {\it silent} mobile robots where robots can start from multiple sources. The robots have unique labels from a range $[0,\;L]$ for some positive integer $L$. Any two co-located robots do not have the information about the label of the other robot. The robots have {\it weak multiplicity detection} capability, which means they can determine if it is alone on a node. The robots are assumed to be able to identify an increase or decrease in the number of robots present on a node in a particular round. However, the robots can not get the exact number of increase or decrease in the number of robots. We have proposed a deterministic distributed algorithm that solves the {\it dispersion} of $k$ robots in an oriented ring in $O(\log L+k)$ synchronous rounds with $O(\log L)$ bits of memory for each robot. A lower bound $\Omega(\log L+k)$ on time for the dispersion of $k$ robots on a ring network is presented to establish the optimality of the proposed algorithm.\vspace{1cm}
		
		\noindent\textbf{Keywords:} Dispersion, Ring Network, Silent obots, Multiple Sources, Lower Bound, Deterministic Algorithm
	\end{abstract}

	\section{Introduction}
In distributed computing, the {\it dispersion} problem for mobile robots in a graph has recently become very popular. The goal of the {\it dispersion} problem is starting from one or multiple source nodes; a set of mobile robots must be placed in the graph nodes by ensuring that no two robots are co-located. The {\it dispersion} problem was first introduced by Augustine and Moses Jr. \cite{AugustineM18} for a set of mobile robots. They investigated the problem by focusing on minimizing the memory required by each robot and the number of rounds required to achieve
dispersion in an arbitrary graph. Over the years, the {\it dispersion} problem has been well-studied in the literature over various models \cite{AgarwallaAMKS18,Das2020,KshemkalyaniMS19,KshemkalyaniMS20icdcn,KshemkalyaniMS20walcom,MollaMM20,MollaM19,ShintakuSKM20}. As the {\it dispersion} problem aims at distributing a group of robots it has got the potential for several practical applications. One such application is related to electric vehicles and charging stations. In order to reduce the waiting time of a vehicle in a charging station, the cars can be spread over the charging stations by ensuring that each charging station is occupied by one car. 

The {\it dispersion} problem is closely related to several other well-studied problems on a graph network such as exploration \cite{BrassCGX11,BrassVX14,DereniowskiDKPU15}, scattering \cite{BarriereFBS11,ElorB11,mondal2022uniform,poudel2020fast,ShibataMOKM16}, load balancing \cite{berenbrink2009new,cybenko1989dynamic,sauerwald2012tight,subramanian1994analysis,xu1992analysis} etc. 
Dispersion of robots is very similar to the robot scattering or
uniform-deployment problem on graphs \cite{BarriereFBS11,ElorB11,mondal2022uniform,poudel2020fast,ShibataMOKM16}, both of which require the spreading of robots uniformly in the graph. The dispersion problem is closely related to the exploration problem where $n$ robots start at a given node and must explore the graph in
as few rounds as possible \cite{BrassCGX11,BrassVX14,DereniowskiDKPU15}. 
For the dispersion problem, if the number of robots is equal to the number of nodes, then every possible solution to the dispersion problem would also solve the $n$ robot exploration under the same set of conditions. 
The dispersion problem can be viewed as a variant of load balancing on graphs \cite{berenbrink2009new,cybenko1989dynamic,sauerwald2012tight,subramanian1994analysis,xu1992analysis}, wherein the nodes usually begin with an arbitrary amount of load and are required to transfer load using edges until each node has nearly the same amount.

\subsection{Motivations}
Our work is motivated by the result of gathering by Bouchard et al. \cite{BouchardDP20}. The authors have solved the problem of gathering robots without communication capability. Gorain et al. \cite{gorain2024collaborative} investigated the dispersion problem on arbitrary graphs considering the same model described in the paper \cite{BouchardDP20}. In particular, they assumed that a robot could identify the following: (1) whether it is alone on a node, (2) whether the number of robots changes at the node compared to the previous round. They proposed a deterministic algorithm that achieves dispersion on any arbitrary graph in time $O (k \log L + k^2 \log \Delta)$ where $\Delta$ is the maximum degree of a node in the graph. Each robot uses $O(\log L + \log \Delta)$ additional memory, starting from a single source. The memory $O(\log L + \log \Delta)$ required in  \cite{gorain2024collaborative}  optimal. However, the authors did not comment anything about the optimality of the time of dispersion in this model. We investigate the dispersion problem with silent robots on the class of cycles and proposed an optimal solution even if the mobile robots were initially placed in multiple source nodes.

\subsection{Our Contributions}

\begin{enumerate}

\item This work considers the dispersion problem starting from multiple sources. There is no restriction on the initial positions of the robots as well as the number of co-located robots on a multiplicity. Our proposed algorithm works without direct, explicit communication among the robots. 
\item Our proposed algorithm solves the dispersion problem within $O(\log L+k)$ synchronous rounds with $O(\log L)$ bits of persistent memory for each robot.
\item We have shown that dispersion of silent mobile robots requires at least $\Omega(\log L+k)$ synchronous rounds on a ring.
\end{enumerate}

\section{Model and Problem Definition}
The robots are located on the nodes of an anonymous $n$ node ring denoted by $\mathscr{R}=(V,\; E)$. The nodes of the graphs are anonymous. However, the edges incident to a node $v$ are labeled 0 and 1. The ring is oriented, i.e., the robots have a global notion of clockwise and counter-clockwise directions. As a consequence, the labeling of the ports has the same orientation. For a node $v$, the adjacent node connected through port 0 is said to be a predecessor of $v$ and denoted by $pre(v)$. Similarly, the adjacent node connected through port 1 is said to be a successor of $v$ and denoted by $succ(v)$. Initially, multiple nodes in $\mathscr{R}$ can contain more than one robot, i.e., a multiplicity. There is no restriction on the initial placement of such multiplicity nodes. 

The robots move in synchronous rounds, and at most, a robot can traverse one edge in every round. Each round is comprised of two different stages. In the first stage, the robot at a node $v$ completes all the local computations. In the second stage, based on the computations done in the previous stage, either move along one of the edges incident to $v$ or stay at $v$. When two robots choose to move in the same round along the same edge from opposite endpoints, neither robot is capable of detecting the other's movement. In other words, a robot can detect the presence of other robots only at nodes in the graph, not at edges.

Every mobile robot has a unique ID in the $[0,\;L]$ range, where $L \ge k$. A mobile robot is aware of its ID and the value of $L$ but is uninformed of the identities of other robots. The robots are silent, i.e., no two robots can directly communicate. However, at any round, a robot is capable of detecting the following local activities:

\begin{enumerate}
\item The robots have {\it weak multiplicity detection} capability. If the robot is present at a node $v$ in a round and no other robot is co-located with it, then it can detect that it is alone in this round. A local variable $alone=true$ represents the robot's detection of this activity. 
 \item If the robot stays at a node $v$ in a round $t$, and the number of robots co-located at $v$ increases at the end of round $t$, then the robot can detect in round $t+1$ that the number of co-located robots increased in the previous round. A local variable $increase=true$ represents the detection of this activity by the robot in round $t+1$. If no such activity is detected, then $increase=false$.
 
\item If the robot stays at a node $v$ in a round $t$, and the number of robots co-located at $v$ decreases at the end of round $t$, then the robot can detect in round $t+1$ that the number of co-located robots increased in the previous round. A local variable $decrease=true$ represents the detection of this activity by the robot in round $t+1$. If no such activity is detected, then $decrease=false$.
\end{enumerate}
Note that the robots can not identify the exact number of co-located robots that increased or decreased. Also, if there is an equal number of arriving and leaving robots at node $v$, then the robot $M$ can not detect anything, and both the variables $increase$ and $decrease$ will be set as false in the next round. \\

 Let $MaxSize$ denote the length of the binary representation of the maximum ID of a robot which is $L$. As $L$ is known to all the robots, $MaxSize=\lfloor\log L\rfloor+1$ is a global knowledge. 
 Without loss of generality, we may assume that the ids of each robot is of size $MaxSize$. Otherwise, every robot can append necessary numbers of zeros in the front of its binary label to make the size of the label equals to $MaxSize$.

\subsection{Related Work} Augustine and Moses Jr. \cite{AugustineM18} were the first to introduce the {\it dispersion} problem. The authors investigated the problem with the assumption that the number of robots (say $k$) is equal to the number of vertices (say $n$). The problem has been investigated in paths, rings, trees, and arbitrary graphs. They proved that, for any graph $G$ of diameter $D$, any deterministic algorithm must take $\Omega(\log k)$ bits of memory by each robot and $\Omega(\log D)$ number of rounds. For arbitrary graphs, they provided an algorithm that requires $O(m)$ rounds where each robot requires $O(n\log n)$ bits of memory. For paths, rings, and trees, they proposed algorithms for each kind of graph such that each robot requires $O(\log n)$ bits of memory and takes $O(n)$ rounds. Furthermore, their proposed algorithm for rooted trees takes $O(D^2)$ rounds, and each robot requires $O(\Delta + \log n)$ bits of memory.

In the paper, \cite{KshemkalyaniF19}, Kshemkalyani and Ali proposed five different {\it dispersion} problems on graphs. Their first three algorithms require $O(k \log \Delta)$ bits for each robot and $O(m)$ steps running time, where $m$ is the number of edges and $\Delta$ is the degree of the graph. These three algorithms differ in whether they use the synchronous or the asynchronous system model and in what, where, and how data structures are maintained. Their fourth algorithm considers the asynchronous model and uses $O(D \log \Delta)$ bits of memory at each robot and $O(\Delta ^D)$ rounds, where $D$ is the graph diameter. Their fifth algorithm under the asynchronous model uses $O(\max(\log k, \log \Delta ))$ bits memory at each robot and $O((m-n)k)$ rounds. In \cite{KshemkalyaniMS19}, Kshemkalyani et al. provided a deterministic algorithm in arbitrary graphs in a synchronous model that requires $O(\min(m, k \Delta) \log k)$ rounds and $O(\log n)$ bits of memory by each robot. 

The {\it dispersion} problem was first studied on a grid graph by Kshemkalyani et al. \cite{KshemkalyaniMS20walcom}. They provided two deterministic algorithms: (i) for the local communication model (where robots can only communicate with other robots that are present at the same node.) and (ii) for the global communication model (where a robot can communicate with any other robot in the graph possibly at different nodes (but the graph structure is not known to robots)). For the local communication model, they proposed an algorithm that requires $O(\min(k, \sqrt{n}))$ rounds, and each robot uses  $O(\log k)$ bits of memory. For the local communication model, their algorithm requires $O(\sqrt{k}))$ rounds, and each robot uses  $O(\log k)$ bits of memory. An extension of this work on general graphs is considered by Kshemkalyani et al. \cite{KshemkalyaniMS20icdcn}. They proposed an algorithm for arbitrary graphs using DFS traversal that requires $O(\min(m, k\Delta))$ rounds and uses $\Theta(\log(\max(k, \Delta)))$ bits memory at each robot. They also proposed a BFS
traversal-based algorithm for arbitrary graphs that requires $O(\max(D, k)\delta(D + \Delta))$ rounds and uses $O(\max(D, \Delta \log k))$ bits memory at each robot. Further, they proposed an algorithm for arbitrary trees using
BFS traversal that requires $O(D \max(D, k))$ rounds and uses $O(\max(D, \Delta \log k))$ bits memory at each robot. Agarwalla et al. \cite{AgarwallaAMKS18} explored the {\it dispersion} problem on dynamic rings. In the paper  \cite{MollaMM20}, Molla et al. introduced fault-tolerant in {\it dispersion} problem in a ring in the presence of Byzantine robots.

Molla et al. \cite{MollaM19} used randomness in the {\it dispersion} problem and proposed an algorithm assuming $O(\log \Delta)$ bits of memory for each of the robots. For any randomized algorithm for the {\it dispersion} problem, a matching lower bound of $\Omega(\log \Delta)$ bits is presented. The authors also investigated a generalized version of dispersion, namely
$k$-dispersion problem where $k > n$ robots must be dispersed over $n$ nodes by ensuring that at most $\dfrac{k}{n}$ robots are placed at each node. Das et al. \cite{Das2020} provided an optimal algorithm such that each robot uses $O(\log \Delta)$ bits of memory.

\section{Dispersion from Multiple Sources}

\subsection{High level Idea}
{\color{black}{Before going to the details of the proposed algorithm, as a warm-up, we present an informal description of dispersing mobile robots from a single source in an oriented ring. Initially, all the robots are located at the source node $s$. Since the robots have distinct IDs, in the binary representation of the IDs of two robots, from right to left, at least one bit must be different. Based on this fact, the robots process the bits of the binary representations of the IDs one by one from right to left, and for each processed bit by a robot with a value of 1, the robot moves clockwise one step. If the bit is 0, the robot does not move. Hence, two robots, having different $j^{th}$ bits, will be separated while processing the $j^{th}$ bit of their IDs. Therefore, the set of robots will be separated into two groups: the group of robots with $j^{th}$ bit 0 stays in the previous node, and the group of robots with the $j^{th}$ bit 1 moves clockwise 1 step. We call such an event a {\it split}. There may be an integer $j$ such that all the robots have the same $j^{th}$. Hence, split only happens if at least two robots have different $j^{th}$ bit. Thus, if the length of the binary representation of the maximum ID is $\ell$ when $\ell$ bits are processed, every robot will be separated from each other. 

In order to implement the above idea, we must ensure that if two groups of robots are separated once, they must not be co-located again. If this happens, there may be two robots, one from each of these two separated groups, for which the bits in the rest of the IDs are the same. Hence, they will not be separated using the above process. 

In order to achieve this, we `activate' alternate groups in a particular phase. Specifically, among two consecutive groups, exactly one group of robots processes their bits in a particular phase, and the next group remains inactive. As a consequence, it would ensure that the split happens only in one group, and the other group of robots can detect this split. If a split happens, the next group, which was inactive, moves one step forward in the next phase to create a space for this new group that has been created due to the split in the previous phase. In this way, no two groups, which have been separated before due to a split, will ever be merged again. 

{\color{black}{If we allow the robots to be active in two adjacent nodes simultaneously, the robots may not identify the split. For example, consider two groups of robots $G_1$ and $G_2$ present on adjacent nodes $v_1$ and $v_2$, respectively. Suppose the robots on both nodes become active at the same round, processing the $j^{th}$ bit. Next, consider the following two different scenarios:
\begin{enumerate}
    \item No split happened at the nodes $v_1$ and $v_2$. For both groups, the $j^{th}$ is 0 for all the robots.
    \item A split happened at the nodes $v_1$ and $v_2$. Consider the case when a split happened for both groups such that the number of incoming robots is exactly equal to the outgoing robots at the node $v_2$. 
\end{enumerate} However, the robots of $G_2$ can not possibly distinguish between the above two cases. In particular, the robots of $G_2$ can not identify the split for the second case. To proceed, the robots must check the $(j+1)^{th}$ bit for a possible split for the first case. In doing so, robots from $G_1$ and $G_2$ co-located on $v_2$ would start processing the $(j+1)^{th}$ without realizing they are from different groups. As discussed earlier, there may be two robots that might not be separated in the future.  }}

The main difficulty in implementing the above idea for the un-rooted case is determining which group will be active or inactive in a particular phase (A phase is a collection of rounds. The details will be defined later). In the rooted case, in the odd phases, groups positioned at odd  distances from the source are active, and in the even phases, even distanced groups are active. In case of arbitrary start, having multiple sources, two consecutive groups (from two different sources) may become active in the same phase, which creates a problem in identifying a split. To avoid such a scenario, we follow the steps described below.

\begin{itemize}
\item We ensure that no two groups are consecutive initially before they start processing the bits of their labels. To do so, first, we merge consecutive groups into a single group. This is done in the following way. Every group tries to elect a leader robot by processing the bits of their labels. After every split, the robots in each group will move to the predecessor node to check whether a robot becomes alone or not due to the split. Since only the first group of robots of a chain of consecutive groups have an empty neighbor in the backward direction, only this group will be able to elect the leader successfully. Once the leader is elected, the leader moves forward to visit all the groups of robots until it finds an empty node. When visited by a robot in a specific round, the other groups understand this as a signal to merge and merge with the last group in the chain.

\item  Once the merging is done, the algorithm executes as explained in the rooted case. However, the leader, who was previously elected, moves forward in each phase to ensure that an empty node is present to expand the chain, which is now created due to splits. Whenever the leader finds another group of agents ahead, the group of agents synchronizes their relative positions (odd or even) with respect to the previous group.  
\end{itemize}
The details of the above process are explained later while describing the algorithm.
}
}

\begin{definition}
Let $G_i$ denote the group of robots on the node $v_i$. A set of group of robots $C_j=\lbrace G_1,G_2,\ldots,G_p\rbrace$ for some $j,p>0$ is said to form a chain of groups if $\forall i\in \lbrace 2,\ldots,p-1\rbrace,\; pred(v_i)=v_{i-1}$ and $succ(v_i)=v_{i+1}$. 
\end{definition}
A chain of groups refers to consecutive groups of robots without any empty nodes. Initially, there can be multiple chains of groups. While executing our proposed algorithm $AlgorithmMultiStart$, all the robots belonging to the same chain of groups are merged on a single node. Later on, the robots would form a chain of groups of robots due to a split of robots by the execution of $AlgorithmMultiStart$. 

During any execution of $DispersionMultiStart$, a robot checks its status and variables to decide its next action. All the robots have the following variables: $move$, $advance$, $leader$, $start$, $settle$, $increase$, $decrease$, and $alone$ stored in its memory. 
\begin{enumerate}
    \item $\boldsymbol{move}:$ The variable $move$ can have one of the following values from the set $\lbrace 0,1,2\rbrace$. The variable $move$ is reset to 0 before the start of each phase. During the execution of $DispersionMultiStart$, the variable can be updated to its permitted values from the set $\lbrace 0,1,2\rbrace$ based on the robot movement. 
    \item $\boldsymbol{proceed}:$ The variable $proceed$ can have one of the following values from the set $\lbrace 0,1,2\rbrace$. The variable $proceed$ is not reset to 0 like the variable $move$ before the start of each phase. During the execution of $DispersionMultiStart$, the variable can be updated to its permitted values from the set $\lbrace 0,1,2\rbrace$ based on the robot movement.  
    \item $\boldsymbol{start}:$ The variable $start$ is a binary variable. Let a robot $M$ be at the node $v$, and $w$ be the predecessor of $v$. If all the robots up to $w$ are $idle$ and $M$ is a possible candidate that may become $idle$ at the beginning of its next phase, then $M$ would update the variable $start$ to 1. Note that the $start$ variable is not reset to 0 at the beginning of a phase. 
    \item $\boldsymbol{settle}:$ The $settle$ variable is also a binary variable. Let a robot $M$ be at the node $v$, and $w$ be the predecessor of $v$. If all the robots up to $w$ are $idle$, and $M$ is alone at the node $v$, then the $settle$ is updated to 1. 
    \item $\boldsymbol{alone}:$ If a robot $M$ is the only robot that lies on a node, it updates its $alone$ variable to $true$. Otherwise, the $alone$ variable is set to $false$. 
    \item $\boldsymbol{increase}:$ At the beginning of a round, $increase$ is set to $false$ for all the robots. If some robots arrive at a node $u$ in the round $t$, then for all the robots that were at node $u$ before the arrival of robots in round $t$, the variable $increase$ is set to $true$ at the beginning of round $t+1$. Otherwise, the variable $increase$ remained set to $false$. 
     \item $\boldsymbol{decrease}:$ At the beginning of a round, $decrease$ is set to $false$. If some robots leave at a node $u$ in the round $t$, then for all the robots that would remain at node $u$ before the arrival of robots in round $t$, the variable $decrease$ is set to $true$ at the beginning of round $t+1$. Otherwise, the variable $decrease$ is remained set to $false$. 
     \item $\boldsymbol{leader}:$ Initially, the variable $leader$ is set to $false$. If a robot $M$ identifies as a leader, the variable $leader$ is set to $true$. 
      \item $\boldsymbol{advance}:$ Initially, the variable $advance$ is set to 0 for all the robots. If the variable $leader=true$, then at $MaxSize+1$ rounds, a leader $M$ updates the variable $advance$ to 1.

\end{enumerate}

{\color{black}{  Each robot have a variable called {\it status} which can be either $leaderelection, \;activemerge,\\ \;activedisperse, \;passive, \;wait, \;jump,\; or\; idle$. Initially, all the robots have the status $leaderelection$. A robot $M$ with status $X \in \{activemerge, \;activedisperse,\; passive,\; wait, \\jump,\; idle\}$ calls the procedure $X(M)$ at the start of any phase. During the execution of a subroutine $X(M)$, if a robot with status $X$ updates its status to $Y$, then it continues the execution of $X(M)$ until the current phase completes and in the next phase it starts executing $Y(M)$. }}The proposed algorithm runs in phases where each phase consists of 19 synchronous rounds. Based on the status an robot $M$ executes either $leaderElection(M,i)$ $ActiveMerge(M)$, $ActiveDisperse(M)$, $Passive(M)$, $Wait(M)$, and $Jump(M)$. A robot with status $idle$ never participates in any algorithm.

\begin{center}
\begin{tabular}{|c|c|c|c|c|c|c|c| } 
\hline
status &leaderelection & activemerge &  activedisperse & passive & wait & jump& \color{blue}{Leader}\\
\hline
 Round 1 & Yes  & No &No&No&No&No&\color{blue}{No}\\
 \hline
Round 2 &Yes  & No &No&No&No&No&\color{blue}{No}\\
\hline
Round 3 & Yes  & No &No&No&No&No&\color{blue}{No}\\
\hline
Round 4 & Yes  & No &No&No&No&No&\color{blue}{No}\\
\hline
Round 5 & Yes  & No &No&No&No&No&\color{blue}{Yes}\\
\hline
Round 6 & No  & Yes &No&No&No&No&\color{blue}{Yes}\\
\hline
Round 7 & No  & Yes &No&No&No&No&\color{blue}{Yes}\\
\hline
Round 8 & No  & Yes &No&No&No&No&\color{blue}{No}\\
\hline
Round 9 & No  & No & Yes & Yes &No&No&\color{blue}{Yes}\\
\hline
Round 10 & No  & No & Yes & Yes &No&No&\color{blue}{Yes}\\
\hline
Round 11 & No  & No & Yes & Yes &No&No&\color{blue}{Yes}\\
\hline
Round 12 & No  & No & Yes & Yes &No&No&\color{blue}{No}\\
\hline
Round 13 & No  & No & Yes &No &No&No&\color{blue}{No}\\
\hline
Round 14 & No  & No & Yes &No & Yes &No &\color{blue}{No}\\
\hline
Round 15 & No  & No & Yes & Yes &No &No &\color{blue}{No}\\
\hline
Round 16 & No  & No & No & Yes &No &No &\color{blue}{No}\\
\hline
Round 17 & No  & No & Yes & Yes & Yes & Yes &\color{blue}{No}\\
\hline
Round 18 & No  & No & Yes &No &No &No &\color{blue}{No}\\
\hline
Round 19 & No  & No & Yes &Yes &No &No &\color{blue}{No}\\
\hline
\end{tabular}
\end{center}

\begin{algorithm}[h]
\scriptsize
\KwIn{$M,\;i$}

\tcp{ Round 1:}
\uIf{alone=true and $proceed=0$}
    {
    update leader=true \;
    }
\ElseIf{the $i^{th}$ bit of $l(M)$ is 1} 
    {
     update $proceed=1$ and move through \bf{Port 1} \;\tcp{If at least two robots have different $i^{th}$ bit, a split happens}
    } 
 \tcp{Round 2:}
\If{proceed=0 and decrease=true}
{ update $proceed=2$ and move through \bf{Port 1} \;\tcp{If a split happened, then the robots which stayed back move forward to inform about the split}}
  \tcp{Round 3:}
\uIf{proceed=1 and increase=true or proceed=2}
{move through {\bf Port 0}\;}
\uIf{proceed=1 and increase=false}
{update $proceed=0$ and move through {\bf Port 0}\;}
\tcp{Round 4:}
\uIf{proceed=1}
{move through {\bf Port 0}\;}
\ElseIf{proceed=2}
{stay at the current node\;}
\tcp{Round 5:}
\uIf{alone=true and proceed=1}
{update leader=true, $proceed=0$ and move through {\bf Port 1}\;
}
\ElseIf{alone=false and proceed=1}
{move through {\bf Port 1} and update $proceed=0$\;}
\uIf{i=MaxSize}{update $status=activemerge$\;}
\Else{update i=i+1\;}
\tcp{Round 6 - Round 19}
stay at the current node\;

\caption{$LeaderElection(M,i)$}
\label{algo:LE}
\end{algorithm}

\begin{algorithm}[]
\scriptsize
\KwIn{$M$}
\tcp{Round 1 - Round 5}
stay at the current node\;
\tcp{Round 6:}

\If{leader=true}
{move through {\bf Port 1}\;}
\tcp{Round 7:}
\If{alone=true and leader=true}
{move through {\bf Port 0} and update status=ActiveDisperse\;\tcp{leader identify that merging is complete and move to its previous node}}
\tcp{Round 8:}
\uIf{increase=false, leader=false }
{move through {\bf Port 1}\;}
\ElseIf{increase=true, leader=false}
{update status=ActiveDisperse\;\tcp{robots identify that merging is complete}}
\tcp{Round 9 - Round 19}
stay at the current node\;
\caption{$ActiveMerge(M)$}
\label{algo:AM}
\end{algorithm}

\subsection{Description of $\boldsymbol{AlgorithmMultiStart}$}
In this section, we present a round-wise description of our proposed algorithm $AlgorithmMultiStart$ in a particular phase. 
\begin{enumerate}
    \item {\bf Rounds designated for Leader Election }: Rounds 1-5 are designated for the leader election step. Initially, all the robots have $status=leaderelection$ and $i=MaxSize$.
    \begin{enumerate}
        \item {\bf Round 1:} In this round, only the robots with $\boldsymbol{status=leaderelection}$ participates. The robots process the $i^{th}$ bit of $l(M)$. If the $i^{th}$ bit of $l(M)$ is 1, the robot $M$ updates the variable $proceed=1$ (By checking the variable $proceed$, a robot with $status=leaderelection$ can identify that it has moved one step forward in round 1), and moves through port 1. Otherwise, the robot $M$ stays at the current node. A split will occur if at least two robots have different $i^{th}$ bit. If $alone=true$ and $proceed=0$, then the robot updates $leader=true$.
        \item {\bf Round 2:} In this round, only the robots with $status=leaderelection$ participates. If a split happened in round 1, then the robots that did not move (By checking whether the variable $proceed=0$) can identify $decrease=true$ in this round. As a consequence, they become aware of the split in round 1. However, the robots that moved forward in round 1 ($proceed=1$) can not possibly identify the split. To inform them about the split, the robots with $status=leaderelection$, $proceed=0$, and $decrease=true$ update $proceed=2$ and move through port 1 in this current round. 
        \item {\bf Round 3:} In this round, only the robots with $\boldsymbol{status=leaderelection}$ participates. Due to the movement of robots with $status=leaderelection$, $proceed=0$, and $decrease=true$ in round 2, the robots with $proceed=1$ (moved in round 1) identify $increase=true$. As a consequence, they become aware of the split in round 1. The robots with $proceed=1$ and $increase=true$, together with the robots with $proceed=2$, move through port 0 (they return to the node where the split happened). If a robot with $proceed=1$ identifies $increase=false$, then it becomes aware that a split did not happen in round 1 as the $i^{th}$ bit was 1 for all the robots. In such a scenario ($status=leaderelection$, $proceed=0$ and $decrease=false$), the robots move through port 0 and update $proceed=0$. 
        \item {\bf Round 4:} In this round, only the robots with $\boldsymbol{status=leaderelection}$ participates. If $proceed=1$, then the robots move through port 0. Note that according to the definition of a chain of groups, there exists at least one node (say $v_1$) whose predecessor node is empty. As all the robots have distinct IDs, for the robots in $v_1$, there exists one phase in which only a single robot with $proceed=1$ would move to the empty node in round 4. The robots with $proceed=2$ would remain at the current node till $MaxSize$ rounds (They can be viewed as a disqualified candidate for being a leader). 
        \item {\bf Round 5:} In this round, only the robots with $\boldsymbol{status=leaderelection}$ participates. If a robot with $proceed=1$ identifies $alone=true$, it updates $leader=true$ and moves through port 1. In case the robots identify $alone=false$, they move through port 0 and update $proceed=0$. If $i=MaxSize$, the robots terminate the leader election by updating $status=activemerge$. Else, they update $i=i+1$.
    \end{enumerate}
    \item {\bf Rounds designated for Merging of Chains before Dispersion :} Rounds 6-8 are designated for merging the groups in a chain. 
    \begin{enumerate}
        \item {\bf Round 6:} In this round, only the robots with $\boldsymbol{status=activemerge}$ participates. If a robot identifies $leader=true$ and $status=activemerge$, it moves through port 1 to start the merging. It continues to move forward in this current round until it finds an empty node.  
        \item {\bf Round 7:} In this round, only the robots with $\boldsymbol{status=activemerge}$ participates. In this round, the leader checks whether it has moved to an empty node. If $alone=true$ and $leader=true$, the leader identifies that the merging is complete. It moves through port 0 and terminates the merging by updating $status=activedisperse$.
        \item {\bf Round 8:} In this round, only the robots with $\boldsymbol{status=activemerge}$ participates. In this round, a non-leader robot discovers whether the leader has returned or not in round 7. If a robot identifies $leader=false$ and $increase=false$, it becomes aware that the merging is incomplete. It moves through port 1 to continue merging. In case $leader=false$ and $increase=true$, a robot identifies that the merging is complete and the leader has returned. It updates $status=activedisperse$ and terminates merging. 
    \end{enumerate}
     
     \item {\bf Rounds designated for Dispersion:} Rounds 9-19 are designated for dispersion. 
     \begin{enumerate}
         \item {\bf Round 9:} In this round, only a leader with status $\boldsymbol{activedisperse}$ or $\boldsymbol{passive}$ participates. If the leader identifies $advance=0$ and $alone=false$, it moves through port 1 (to check whether the successive node is empty) and updates $advance=1$. 
         \item {\bf Round 10:} In this round, only a leader with status $\boldsymbol{activedisperse}$ or $\boldsymbol{passive}$ participates. If the leader identifies $advance=1$ and $alone=true$, it moves through port 1 (to check whether the successive node is empty). 
          \item {\bf Round 11:} In this round, only a leader with status $\boldsymbol{activedisperse}$ or $\boldsymbol{passive}$ participates. If the leader identifies $advance=1$ and $alone=true$, it identifies that it has moved to an empty node. It moves through port 0 and updates $advance=0$. The leader becomes aware of the fact that the successive node is empty. 
           \item {\bf Round 12:} In this round, only a non-leader with status $\boldsymbol{activedisperse}$ or $\boldsymbol{passive}$ participates. If a non-leader robot  identifies $increase=true$ due to the movement of a leader in round 11, it moves through port 0 and updates $status=passive$. 
           \item {\bf Round 13:} In this round, only a robot with status $\boldsymbol{activedisperse}$ participates. If it identifies $alone=true$ and $start=0$, then it updates $start=1$. In case, $alone=true$ and $start=0$, then a robot updates $settle=1$ (To remember that it might become $idle$). If the last bit of $l(M)$ is 1, then $M$ updates $move=1$ (to identify that it has moved in round 13) and moves through port 1. If at least two robots have different last bit, a split would occur. 
           \item {\bf Round 14:} In this round, only a non-leader with status $\boldsymbol{activedisperse}$ or $\boldsymbol{jump}$ participates. If a robot with status $activedisperse$ identifies $decrease=true$, it becomes aware of the split in round 13. To inform about the split, the robot updates $move=2$ (to remember that it moved in round 14 to inform about the split) and moves through port 1. If the robot identifies $status=jump$ it moves through port 1 to create space for the robots that had arrived in the previous phase. 
           \item {\bf Round 15:} In this round, only a non-leader with status $\boldsymbol{activedisperse}$ or $\boldsymbol{passive}$ participates. In case a robot with $status=activedisperse$ identifies $move=0$, it realizes that no split happened in round 13 as all the robots had the last bit 0. It updates $status=passive$. Otherwise, if $move=1$ and $increase=false$, the robot becomes aware that no split happened in round 13 as all the robots had the last bit 1. In case $move=2$, it identifies that a split had occurred and it moved to inform about the split. It moves through port 0 and updates $status=passive$. If a robot with $status=passive$ identify $increase=true$ due to a split in round 13, it sets $move=1$ and becomes aware that it must vacate the current node. 
           
           \item {\bf Round 16:} In this round, only a non-leader with status $\boldsymbol{passive}$ participates. If a robot with $status=passive$ identifies $move=1$, it realises that it must vacate the current node by informing the incoming group that they must wait. To inform the incoming group the robot moves through port 0.  
           \item {\bf Round 17:} In this round, only a non-leader with status $\boldsymbol{activedisperse}$ or $\boldsymbol{passive}$ or $\boldsymbol{jump}$ or $\boldsymbol{wait}$ participates. If a robot with $status=activedisperse$ identifies $move=1$ and $decrease=true$, it realizes that it has moved to an occupied node. It updates $status=wait$. In case a robot with $status=activedisperse$ identifies $move=1$ and $decrease=false$, it realizes that it has moved to an empty node. It updates $status=activedisperse$. If a robot with $status=passive$ identifies $move=0$, then it $updates=activedisperse$. Consider the case when a robot with $status=passive$ identifies $move=1$. It learns that a split happened and it has moved in round 16 to inform about the split. It updates $status=jump$  and moves through port 1. If a robot with $status=jump$ identifies $decrease=false$, it learns that there were no robots in the current node when it jumped from the predecessor node. It updates $status=activedisperse$. Consider the case when a robot with $status=jump$ identifies $decrease=true$. it becomes aware of the fact that it jumped to an occupied node. It updates $status=wait$. if a robot has $status=wait$, it updates $status=passive$.
          \item {\bf Round 18:} In this round, only a robot with status $\boldsymbol{activedisperse}$ participates. If a robot with $status=activedisperse$ identifies $settle=1$, then it moves through port 1 (to inform its successive robots it will become $idle$). 
           \item {\bf Round 19:} In this round, only a robot with status $\boldsymbol{activedisperse}$ and $\boldsymbol{passive}$ participates. If a robot with $status=activedisperse$ identifies $settle=1$, then it moves through port 0 and updates $status=idle$. If a robot with $status=passive$ identifies $increase=true$, then it learns that the robot in the predecessor node will become $idle$ and the robot has arrived to inform the same. It updates $set=1$ (To remember that the predecessor robot has become $idle$). 
           
     \end{enumerate}
\end{enumerate}

\begin{algorithm}[]
\scriptsize
\tcp{ Round 1 - Round 8}stay at the current node\;
\tcp{Round 9:}
\uIf{leader=true, advance=0, and alone=false}
{set advance=1 and move through {\bf Port 1}\;}
\tcp{Round 10:}
\If{leader=true and advance=1 and alone=true}
{move through {\bf Port 1}\;}
\tcp{Round 11:}
\If{leader=true and advance=1 and alone=true}
{move through {\bf Port 0} and update $advance=0$\;}
\tcp{Round 12:}
\If{increase=true}
{move through {\bf Port 0} and 
set status=passive\;}
 \tcp{ \bf Round 13:} 
    \uIf{$alone=true$ and $start=0$}
        {update $start=1$\;} 
      \uElseIf{$alone=true$ and $start=1$}  
        {update $settle=1$\;}
    \ElseIf{the last bit of $l(M)$ is 1}
        {
            set $move=1$ and move through {\bf port 1} \tcp{If at least two robots have different last bits, a split happens}
        }
        
     \tcp{ \bf Round 14:}
        \If{$move=0$ and $decrease=true$}
        {
         Update $move=2$ and move through port 1 \tcc{If a split happened in the last round, the nodes that remained will move to inform about the split}  
        }
      
\tcp{ \bf Round 15:}

\uIf{move=0}{
        {$status=passive$ }\tcc{If all robots have the same last bit, then no split happened in round 1, and they all become passive for the next Phase}}
\ElseIf{($move=1$ and $increase=false$) or $move=2$ } 
        {
            
           $status=passive$ and move through port 0 \tcc{either all moved or a subset that moved  in round 1 comes back and becomes passive for the next Phase} 
        }
     \tcp{\bf Round 16}{stay at the current node\;}
      \tcp{\bf Round 17}
{\uIf{$move=1$ and $decrease=true$}
        {
         
        {$status=wait$}\tcc{The active robots learned that they moved to a node occupied by passive nodes. They become wait for the next Phase}
        
        \If{$start=1$}{
        {$start=0$}
      }
      
    }
    \ElseIf{$move=1$ and $decrease=false$ }{
            {set $status=activedisperse$\;}\tcc{The active robots learned that they had moved to an empty node. They remain active for the next Phase}
            
            \If{$start=1$}
                {$start=0$}}}
\tcp{ \bf Round 18:}
\If{$settle=1$}{
    {move through port 1}
}
\tcp{ \bf Round 19:}

\If{$settle=1$}{
    {move through port 0}
    
    {$status=idle$}
}
\caption{$ActiveDisperse(M)$}
\label{activedisperse}
\end{algorithm}

\begin{algorithm}[]
\scriptsize
\tcp{ Round 1 - Round 8}stay at the current node\;

\tcp{Round 9:}
\uIf{leader=true, advance=0, and alone=false}
{set advance=1 and move through {\bf Port 1}\;}
\tcp{Round 10:}
\If{leader=true and advance=1 and alone=true}
{move through {\bf Port 1}\;}
\tcp{Round 11:}
\If{leader=true and advance=1 and alone=true}
{move through {\bf Port 0} and update $advance=0$\;}
\tcp{Round 12:}
\If{increase=true}
{move through {\bf Port 0} and 
set status=passive\;}
\tcp{Round 13-14:}stay at the current node\;
\tcp{ Round 15:} 
\If{increase=true}{set move=1\;\tcc{The robots identify that they need to vacate the current node\;}}
\tcp{ Round 16:} 
\If{move=1}{move through {\bf port 0}\;\tcc{The current node is already occupied. Move through port 0 to inform the incoming group of the same}}
\tcp{ Round 17:}
\uIf{$move=0$}
    {set $status=activedisperse$\;}
    \Else{set $status=jump$ and move through port 1\;\tcp{The robot has moved to inform that after the split the robots have moved to an occupied node}}

           
\tcp{ Round 18:}stay at the current node\;
\tcp{ Round 19:}\If{increase=true}{set start=1\;}
\caption{$Passive(M)$}
\label{passive}
\end{algorithm}

\begin{algorithm}[]
\scriptsize

\tcp{ Round 1 - Round 13}stay at the current node\;
\tcp{Round 14:}
      {move through {\bf port 1}}\tcp{The robot advances one step to give space to the robots that arrived at the current node due to a split in the previous phase}
\tcp{Round 15-16:}
        {stay at the current node\;}   
\tcp{Round 17:}
        \uIf{$decrease=false$}
        {
         Set $status=activedisperse$ \tcp{The node learned that no robot was in the current node when it jumped from the previous node. It changes its status to Active for the next phase}
        }
        \Else{
              {set $status=wait$}\tcp{The node learned that there were some robots in the current node when it jumped from the previous node. It changes its status to wait for the next phase}
              }
\tcp{Round 18 - Round 19}
  {stay at the current node}

\caption{$Jump(M)$}
\label{jump}
\end{algorithm}

\begin{algorithm}[]\scriptsize
\tcp{ Round 1 - Round 16}stay at the current node\;
\tcp{Round 17:}set status=passive \;
\tcp{Round 18 - Round 19}stay at the current node\;
 \caption{$Wait(M)$}
\label{wait}
\end{algorithm}

\subsection{Correctness of $\boldsymbol{DispersionMultiStart}$}
The following two lemmas prove that exactly one leader will be elected for every chain after executing the procedure $LeaderElection$ (Algorithm \ref{algo:LE}).
 \begin{lemma}\label{leader_election_a}
   For some $1\leq j\leq MaxSize$, consider the execution of $LeaderElection(M,j)$. Let $A_j$ be the set of robots at a node $v$ for which $proceed=0$ and $B_j$ be the set of robots for which $proceed=2$ at the beginning of Phase $j$. Also, let $A_j(0)$ (similarly $B_j(0)$ ) and $A_j(1)$ (similarly $B_j(1)$) be the set of robots with $j$-th bit 0 and 1, respectively, in $A_j$ (similarly in $B_j$).
If $|A_j|>1$, then 
\begin{enumerate}
\item   $A_{j+1}=A_j$ and $B_{j+1}=B_j$, when $A_j(1)=A_j$ or $A_j(0)=A_j$.
\item $A_{j+1}=A_j(1)$ and $B_{j+1}=B_j\cup A_j(0)$, otherwise.
\end{enumerate}
 \end{lemma}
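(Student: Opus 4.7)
The plan is to verify the lemma by a direct round-by-round simulation of $LeaderElection(M,j)$ for every robot of $A_j\cup B_j$, tracking $proceed$ together with the flags $alone$, $increase$ and $decrease$. Fix the node $v$ where $A_j$ resides at the start of Phase $j$. Because $|A_j|>1$, no robot at $v$ satisfies $alone=true$ in Round~1, so the first branch of Round~1 (immediate leader election) never fires. The argument then splits naturally into the two cases of the statement.

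\emph{Case 1 (no split).} When $A_j(0)=A_j$, the second branch of Round~1 is false for every robot of $A_j$, so nobody moves; hence no $decrease$ is observed at $v$ in Round~2 and the $proceed=1$ conditions of Rounds~3--5 are vacuous. Every robot of $A_j$ ends the phase at $v$ with $proceed=0$, so $A_{j+1}=A_j$. When $A_j(1)=A_j$, every robot of $A_j$ moves to $succ(v)$ in Round~1 with $proceed=1$; no robot joins them at $succ(v)$ in Round~2 (since $A_j(0)=\emptyset$), so in Round~3 each of them satisfies $proceed=1\wedge increase=false$, resets $proceed=0$, and returns to $v$ through Port~0. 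Rounds~4 and 5 are then inert for $A_j$, and again $A_{j+1}=A_j$.

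\emph{Case 2 (genuine split).} The robots of $A_j(1)$ move to $succ(v)$ in Round~1 with $proceed=1$; the robots of $A_j(0)$ remain at $v$ and detect $decrease=true$ in Round~2, so by the guard of Round~2 they set $proceed=2$ and move to $succ(v)$. In Round~3 the $A_j(1)$-robots (at $succ(v)$ with $proceed=1$) see $increase=true$ and return to $v$ through Port~0, while the $A_j(0)$-robots (now with $proceed=2$) also return to $v$ through Port~0. Round~4 shifts only the $proceed=1$ robots, namely $A_j(1)$, to $pred(v)$; the $A_j(0)$-robots stay at $v$. Round~5 sends the $A_j(1)$-robots back to $v$ through Port~1, resetting $proceed=0$ and, if exactly one of them is alone at $pred(v)$, setting $leader=true$. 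Hence at the end of Phase $j$ the set $A_j(1)$ is at $v$ with $proceed=0$ and the set $A_j(0)$ is at $v$ with $proceed=2$, yielding $A_{j+1}=A_j(1)$ and $A_j(0)\subseteq B_{j+1}$.

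Finally, one checks that every robot of $B_j$ preserves $proceed=2$ throughout Phase $j$ (such a robot may be displaced between $v$ and $pred(v)$, but its $proceed$ value is never overwritten): the rules that handle $proceed=2$ explicitly in Rounds~3 and 4 change only location, while each of the remaining $proceed$-rewriting rules (in Rounds~2, 3 and 5) is guarded so as to act only on robots with $proceed\in\{0,1\}$. Combined with the case analysis above, this gives $B_{j+1}=B_j$ in Case~1 and $B_{j+1}=B_j\cup A_j(0)$ in Case~2, proving the lemma. The main bookkeeping obstacle is the simultaneous tracking of $increase$ and $decrease$ for the two sub-populations of $A_j$ at $v$ and $succ(v)$ during Rounds~2--3; writing a small table listing (location, $proceed$) at the end of each round makes the conclusions transparent.
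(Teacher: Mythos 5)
Your proof is correct in substance and follows the same underlying mechanism as the paper's: a round-by-round trace of $LeaderElection(M,j)$ through Rounds 1--5, split into the ``no split'' and ``split'' cases. The packaging differs: the paper wraps this trace in an induction on $j$ (base case $j=1$, then the identical three-case analysis for $m+1$), whereas you treat the lemma as the single-phase transition statement that it actually is and argue each phase directly. Since $A_j$ is \emph{defined} as the set of robots with $proceed=0$ at the start of Phase $j$, the induction buys nothing beyond what your direct argument already gives, so your version is leaner. You are also more careful than the paper on two points: you explicitly track the $B_j$ population and argue why it is preserved (the paper's proof is essentially silent on $B_j$), and your Round-4/Round-5 trace in the split case correctly has the $proceed=1$ robots (i.e.\ $A_j(1)$) visiting $pred(v)$ while the $proceed=2$ robots stay put, which matches the pseudocode; the paper's base-case write-up swaps these roles and even states the conclusion as $A_2=A_1(0)$, contradicting its own lemma statement. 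One caveat applies to your claim that every $proceed$-rewriting rule outside Rounds 3--4 is guarded by $proceed\in\{0,1\}$: the \texttt{ElseIf} branch of Round 1 is \emph{not} so guarded, so a $B_j$ robot whose $j$-th bit is 1 would, under a literal reading of Algorithm~\ref{algo:LE}, reset $proceed$ to 1 and move. The prose description of Round 4 makes clear that $proceed=2$ robots are meant to sit out until $MaxSize$ phases elapse, so this is a defect of the pseudocode shared with (and unaddressed by) the paper's own proof, but you should state explicitly that you are assuming the intended guard rather than asserting that the written rules already provide it.
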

 \begin{proof}
We use induction to prove the above statements. Let us assume that $V$ represents the initial set of robots at $v$. Assume that $|V|>1$. To prove that the statement is true for $j=1$, consider the execution of $LeaderElection(M,1)$ at the node $v$. We have the following cases:
\begin{case}$\forall M\in V$, the first bit of $l(M)$ is 0: \normalfont We have $A_1(0)=A_1=V$ and $A_1(1)=\emptyset$. All the robots would stay at node $v$ in round 1 (steps 3-4 of Algorithm~\ref{algo:LE}). The robots would identify $decrease=false$ and stay at node $v$ in round 2 (steps 6-7 of Algorithm~\ref{algo:LE}). In round 3 (steps 9-12 of Algorithm~\ref{algo:LE}), the robots would identify $increase=false$, but as $proceed=0$, they would stay at the current node. The robots would stay at the current node as $proceed=0$ in rounds 4-5 (steps 13-21 of Algorithm~\ref{algo:LE}). The robots would stay at the current node for the rest of the rounds. Hence, We have $A_2=A_1(0)=A_1$ and $B_2=\emptyset$. Thus, the statement is true for $j=1$. 
\end{case}

\begin{case}$\forall M\in V$, the first bit of $l(M)$ is 1: \normalfont We have $A_1(0)=\emptyset$ and $A_1(1)=A_1=V$. All the robots would move to $succ(v)$ in round 1 (steps 3-4 of Algorithm~\ref{algo:LE}). The robots would stay at $succ(v)$ in round 2 (steps 6-7 of Algorithm~\ref{algo:LE}). In round 3 (steps 11-12 of Algorithm~\ref{algo:LE}), the robots would identify $proceed=1$ and $increase=false$. The robots would update $proceed=0$ and move to $v$. For the rest of the rounds, the robots will stay at $v$. We have $A_2=A_1(0)=V$ and $B_2=\emptyset$. Thus, the statement is true for $j=1$. 
\end{case}

\begin{case}$\exists M_1,M_2\in V$, the first bit of $l(M_1)$ is 0 and the first bit of $l(M_2)$ is 1 :\normalfont  We have  $A_1(0)\neq\emptyset$ and $A_1(1)\neq \emptyset$. In the round 1 (steps 3-4 of Algorithm~\ref{algo:LE}), the robots of $A_1(1)$ would move to $succ(v)$ and update $proceed=1$. In the round 2 (steps 6-7 of Algorithm~\ref{algo:LE}), the robots of $A_1(0)$ would move to $succ(v)$ and update $proceed=2$. In the round 3 (steps 9-10 of Algorithm~\ref{algo:LE}), the robots of $A_1(0)\cup A_1(1)$ would move back to the node $v$. In round 4 (steps 13-14 of Algorithm~\ref{algo:LE}), the robots of $A_1(0)$ would move to $pred(v)$, and the robots of $A_1(1)$ would stay at the node $v$. In the round 5 (steps 18-21 of Algorithm~\ref{algo:LE}), the robots of $A_1(0)$ would return to $v$ and update $proceed=0$. The robots will stay at the current node for the rest of the rounds. We have $A_2=A_1(0)$ and $B_2=A_1(1)=B_1\cup A_1(1)$ as $A_1=V$ and $B_1=\emptyset$. Thus, the statement is true for $j=1$. 
\end{case}

Therefore, the statements of the lemma are true for $j=1$. Suppose
that the statement is true for some integer $m > 1$. Consider
the execution of $LeaderElection(M,m+1)$. By
induction hypothesis, just before checking the $(m+1)^{th}$ bit for all the robots of $A_m$, $proceed=0$. We have the following cases:
\setcounter{case}{0}
\begin{case}$\forall M\in V$, the $(m+1)^{th}$ bit of $l(M)$ is 0: \normalfont
We have $A_m(0)=A_m$ and $A_m(1)=\emptyset$. All the robots would stay at node $v$ in round 1 (steps 3-4 of Algorithm~\ref{algo:LE}). The robots would identify $decrease=false$ and stay at node $v$ in round 2 (steps 6-7 of Algorithm~\ref{algo:LE}). In round 3 (steps 9-12 of Algorithm~\ref{algo:LE}), the robots would identify $increase=false$, but as $proceed=0$, they would stay at the current node. The robots would stay at the current node as $proceed=0$ in rounds 4-5 (steps 13-21 of Algorithm~\ref{algo:LE}). The robots would stay at the current node for the rest of the rounds. We have $A_{(m+1)}=A_m(0)=A_m$ and $B_{(m+1)}=B_m=\emptyset$. 
\end{case}

\begin{case}$\forall M\in V$, the $(m+1)^{th}$ bit of $l(M)$ is 1: \normalfont
We have $A_m(1)=A_m$ and $A_m(0)=\emptyset$. All the robots would move to $succ(v)$ in round 1 (steps 3-4 of Algorithm~\ref{algo:LE}). The robots would stay at $succ(v)$ in round 2 (steps 6-7 of Algorithm~\ref{algo:LE}). In round 3 (steps 11-12 of Algorithm~\ref{algo:LE}), the robots would identify $proceed=1$ and $increase=false$. The robots would update $proceed=0$ and move to $v$. For the rest of the rounds, the robots will stay at $v$. We have $A_{m+1}=A_m(0)$ and $B_{m+1}=\emptyset$. Thus, the statement is true for $m+1$. 
\end{case}

\begin{case}$\exists M_1, M_2\in V$, the $(m+1)^{th}$ bit of $l(M_1)$ is 0 and the $(m+1)^{th}$ bit of $l(M_2)$ is 1: \normalfont
We have $A_m=A_m(0)$ and $B_m=A_m(1)$. In the round 1 (steps 3-4 of Algorithm~\ref{algo:LE}), the robots of $A_m(1)$ would move to $succ(v)$ and update $proceed=1$. In the round 2 (steps 6-7 of Algorithm~\ref{algo:LE}), the robots of $A_m(0)$ would move to $succ(v)$ and update $proceed=2$. In the round 3 (steps 9-10 of Algorithm~\ref{algo:LE}), the robots of $A_m(0)\cup A_m(1)$ would move back to the node $v$. In round 4 (steps 13-14 of Algorithm~\ref{algo:LE}), the robots of $A_m(0)$ would move to $pred(v)$, and the robots of $A_m(1)$ would stay at the node $v$. In the round 5 (steps 18-21 of Algorithm~\ref{algo:LE}), the robots of $A_m(0)$ would return to $v$ and update $proceed=0$. The robots will stay at the current node for the rest of the rounds. We have $A_{m+1}=A_m(1)$ and $B_{m+1}=B_m\cup A_m(0)$. 
\end{case}
Therefore, by induction, the lemma statement is true for all $1\leq j\leq MaxSize$.
 \end{proof}

\begin{lemma}\label{leader_election_b}
    Let $C_j=\lbrace G_1,G_2,\ldots,G_p\rbrace$ for some $j,p>0$ be a chain of groups in $\mathscr{R}$ where $G_i$ is a group of robots at node $v_i$. If there are $m>1$ number of robots at the node $v_1$, then after $MaxSize$ phases, there would be a unique leader for $C_j$.  
\end{lemma}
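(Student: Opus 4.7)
The plan is to prove the lemma in two parts: (a) exactly one robot in $G_1$ ends up with $leader=true$ after $MaxSize$ phases, and (b) no robot at any $v_i$ with $i \ge 2$ sets $leader=true$ through Algorithm \ref{algo:LE}. Together these give a unique leader for the chain $C_j$.

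For part (a), I would apply Lemma \ref{leader_election_a} at the node $v_1$. By Cases 1 and 2 of that lemma, the sequence $A_{1,v_1}, A_{2,v_1}, \ldots$ is non-increasing in size. Suppose for contradiction that two distinct robots $M_1, M_2$ both lie in $A_{MaxSize+1, v_1}$; then both lie in every $A_{j, v_1}$ for $1 \le j \le MaxSize+1$. Since their $MaxSize$-bit IDs are distinct, they disagree at some bit position $k$; without loss of generality the $k$-th bit of $l(M_1)$ is $1$ and that of $l(M_2)$ is $0$. Then $A_{k,v_1}(0)$ and $A_{k,v_1}(1)$ are both nonempty (containing $M_2$ and $M_1$), so Case~2 of Lemma \ref{leader_election_a} applies and $A_{k+1, v_1} = A_{k,v_1}(1)$, which excludes $M_2$---a contradiction. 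Since Case~2 never produces an empty $A$, this forces $|A_{MaxSize+1, v_1}| = 1$. Let $j_0$ be the phase at which $|A_{v_1}|$ first drops to $1$; then $|A_{j_0, v_1}(1)| = 1$ and a round-by-round trace of Algorithm \ref{algo:LE} shows that the lone bit-$1$ robot $M^*$ moves to $pred(v_1)$ in Round 4 of Phase $j_0$, detects $alone=true$ in Round 5 (because $pred(v_1)$ is empty by the chain definition), and sets $leader=true$. In every subsequent phase $A_{j,v_1}=\{M^*\}$, so $M^*$ may revisit $succ(v_1)$ and return, but no \emph{additional} robot at $v_1$ can ever satisfy the Round~5 condition.

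For part (b), I would argue that at any $v_i$ with $i \ge 2$, no robot can be alone at $pred(v_i) = v_{i-1}$ at the start of Round~5. The key observation is that every $proceed=2$ robot (the $B$-robots), once created at $v_{i-1}$, remains stationary at $v_{i-1}$ throughout every subsequent leader-election phase by Algorithm \ref{algo:LE}. An induction on the phase index, combined with a case analysis on the Round~1--3 movements at $v_{i-1}$, shows that $v_{i-1}$ is occupied at the start of Round~5 in every phase---by $B$-type robots once they exist, and otherwise by $A$-robots that either never leave $v_{i-1}$ or return in Round~3 via the $proceed=2$ branch. Consequently, any robot from $v_i$ with $proceed=1$ that reaches $pred(v_i)$ in Round~4 satisfies $alone=false$ in Round~5 and cannot set $leader=true$.

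The main obstacle is part (b): the concurrent traffic between $G_{i-1}$, $G_i$, and $G_{i+1}$ perturbs the $increase$ and $decrease$ flags at interior nodes in ways that are not directly handled by Lemma \ref{leader_election_a}, so one cannot just replay the $v_1$ argument at $v_{i-1}$. What makes the bookkeeping tractable is that I only need occupancy of $v_{i-1}$ at Round~5, not a clean description of $A_{j, v_{i-1}}$; this weaker invariant is secured by the permanent anchoring of $B$-type robots together with the observation that in early phases (before any $B$-robot appears at $v_{i-1}$) at least one $A$-robot of $G_{i-1}$ remains at $v_{i-1}$ by Round~5. Combining parts (a) and (b), the unique leader of $C_j$ is precisely the robot $M^* \in G_1$ identified in part (a).
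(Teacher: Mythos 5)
Your proposal follows essentially the same route as the paper's proof: part (a) matches the paper's contradiction argument via Lemma \ref{leader_election_a} showing $|A_{MaxSize}|=1$ at $v_1$ and the Round-4/Round-5 trace of the lone $proceed=1$ robot into the empty $pred(v_1)$, and part (b) matches the paper's argument that $v_{t-1}$ stays occupied at Round 5 in every phase, by the permanently anchored $proceed=2$ robots together with the three-case analysis of the all-zero, all-one, and mixed bit patterns. Your framing of part (b) as a weaker occupancy invariant (rather than replaying the full $A_j/B_j$ description at interior nodes) is a slightly more careful presentation of the same idea, but it is not a different proof.
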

 \begin{proof}
     {\it Proof by contradiction.} Consider the execution of the procedure $LeaderElection$ at node $v_1$. 
     
     We claim that $A_{MaxSize}=1$. To prove this, suppose that  $|A_{MaxSize}|>1$. This implies that there exist two robots $M_1$ and $M_2$  such that after $MaxSize$ rounds, $M_1,M_2\in A_{MaxSize}$. By Lemma  \ref{leader_election_a}, they have the exact same bits for all $j\in\lbrace1,2,\ldots,MaxSize\rbrace$.  This is a
contradiction to the unique labels of the robots.

Next we show that only a robot $ M$ from the group $G_1$ will find $alone=true$ in round 5 (steps 18-19) of the execution of $LeaderElection(M,j)$ for some $j$ and no robot from any other group will find $alone=true$ in the execution of $LeaderElection(M,j)$ for any $j$. To prove this for the robots in $G_1$, let  $|A_{ j}|=1$ and $M \in A_{ j}$. During the execution of  $LeaderElection(M,j)$, this robot in round 1 (steps 3-4 of Algorithm~\ref{algo:LE}), moves to $succ(v_1)$ and updates $proceed=1$. In the round 2 (6-7 of Algorithm~\ref{algo:LE}), the robots of $A_j(0)$ would move to $succ(v_1)$ and update $proceed=2$. In the round 3 (9-10 of Algorithm~\ref{algo:LE}), the robots of $A_j(0)\cup A_j(1)$  move back to the node $v$. In the round 4 (steps 13-16 of Algorithm~\ref{algo:LE}), the robot $M$  move to $pred(v_1)$ and the robots in $A_j(0)$  stay at the node $v_1$. In round 5 (steps 18-19 of Algorithm~\ref{algo:LE}), the robot $M$ identifies $alone=true$ and updates $leader=true$.

Since a robot of a group $G_i$ can set $leader =true$ in the $5^{th}$ round (steps 18-19 of Algorithm~\ref{algo:LE}) by identifying $alone=true$ at the node $pred(v_i)$ in the execution of $LeaderElection$, it is enough to show that for any robot $M\in G_t$, for $t>1$, the node $v_{t-1}$ contains at least one robot of the group $G_{t-1}$ in round 4 (steps 13-16 of Algorithm~\ref{algo:LE}) for every execution of $LeaderElection(M,j)$, $j=1, \ldots, MaxSize$. If this true, then any robot of $G_t$ which is present at $v_{t-1}$ identifies $alone=false$ and hence never sets $leader=true$. 

For some $M\in G_{t-1}$, consider the execution of $LeaderElection(M,k)$ for some $k>1$. If $\exists M$ at $v_{t-1}$ for which $proceed=2$, then this robot will remain at $v_{t-1}$ for all the successive phases till $MaxSize$ phases. As a consequence, any robot of $G_t$ present at $v_{t-1}$ will not be able to identify $alone=true$ in round 5 (steps 18-19 of Algorithm~\ref{algo:LE}). Otherwise, we have the following three possible cases:

\setcounter{case}{0}
\begin{case}$\forall M\in G_{t-1}$, the $k^{th}$ bit of $l(M)$ is 0: \normalfont
We have $A_k(0)=A_k$ and $A_k(1)=\emptyset$. All the robots would stay at node $v_{t-1}$ in round 1 (steps 3-4 of Algorithm~\ref{algo:LE}). The robots would identify $decrease=false$ and stay at node $v_{t-1}$ in round 2 (steps 6-7 of Algorithm~\ref{algo:LE}). In round 3 (steps 9-12 of Algorithm~\ref{algo:LE}), the robots would identify $increase=false$, but as $proceed=0$, they would stay at the current node. The robots would stay at the current node as $proceed=0$ in rounds 4-5 (steps 18-21 of Algorithm~\ref{algo:LE}). The robots would stay at the current node for the rest of the rounds. 
\end{case}

\begin{case}$\forall M\in  G_{t-1}$, the $k^{th}$ bit of $l(M)$ is 1: \normalfont
We have $A_t(1)=A_t$ and $A_t(0)=\emptyset$. All the robots would move to $succ(v)$ in round 1 (steps 3-4 of Algorithm~\ref{algo:LE}). The robots would stay at $succ(v_{t-1})$ in round 2 (steps 6-7 of Algorithm~\ref{algo:LE}). In round 3 (steps 9-12 of Algorithm~\ref{algo:LE}), the robots would identify $proceed=1$ and $increase=false$. The robots would update $proceed=0$ and move to $v_{t-1}$. For the rest of the rounds, the robots will stay at $v_{t-1}$. 
\end{case}

\begin{case}$\exists M_1, M_2\in  G_{t-1}$, the $k^{th}$ bit of $l(M_1)$ is 0 and the $k^{th}$ bit of $l(M_2)$ is 1: \normalfont
We have $A_k=A_k(0)$ and $B_k=A_k(1)$. In the round 1 (steps 3-4 of Algorithm~\ref{algo:LE}), the robots of $A_k(1)$ would move to $succ(v)$ and update $proceed=1$. In the round 2 (steps 6-7 of Algorithm~\ref{algo:LE}), the robots of $A_k(0)$ would move to $succ(v)$ and update $proceed=2$. In the round 3 (steps 9-12 of Algorithm~\ref{algo:LE}), the robots of $A_k(0)\cup A_k(1)$ would move back to the node $v$. In round 4 (steps 13-16 of Algorithm~\ref{algo:LE}), the robots of $A_k(0)$ would move to $pred(v_{t-1})$, and the robots of $A_k(1)$ would stay at the node $v_{t-1}$. In the round 5 (steps 18-21 of Algorithm~\ref{algo:LE}), the robots of $A_k(0)$ would return to $v_{t-1}$ and update $proceed=0$. The robots will stay at the current node for the rest of the rounds. 
\end{case}
The above three cases show that at the beginning of round 5 (steps 18-19 of Algorithm~\ref{algo:LE}) during the execution of $LeaderElection(M,k)$, the node $v_{t-1}$ will contain at least one robot from $G_{t-1}$. Thus, any robot of $G_t$ present at $v_{t-1}$ will not be able to identify $alone=true$ in round 5.

 \end{proof}

The following lemma proves that every chain merge to a single group in the chain.

\begin{lemma}\label{merge-complete}
Let $C_j=\lbrace G_1,G_2,\ldots,G_p\rbrace$ for some $j,p>0$ be a chain of groups in $\mathscr{R}$ where $G_i$ is a group of robots at node $v_i$. After the execution of $ActiveMerge$, the chain would be merged into a single group of robots (say $G$) within $O(p)$ synchronous rounds at the node $v_p$.
\end{lemma}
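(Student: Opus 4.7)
The plan is to argue by induction on the phase number of the ActiveMerge subroutine, tracking the leader's position and which groups have already been absorbed into the merging group. The invariant I would maintain is that at the start of phase $i$ of ActiveMerge (for $1 \le i \le p-1$), the unique leader elected for the chain by Lemma~\ref{leader_election_b} together with all robots of $G_1, \ldots, G_i$ is co-located at node $v_i$, while the groups $G_{i+1}, \ldots, G_p$ still sit at their original nodes $v_{i+1}, \ldots, v_p$.

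The base case $i=1$ is immediate from Lemma~\ref{leader_election_b}: at the end of leader election the unique leader of the chain is at $v_1$ together with the rest of $G_1$, and no other group has moved from its original node. For the inductive step I would examine the three active rounds 6--8 of phase $i$. In round 6 the leader at $v_i$ moves through port 1 to $v_{i+1}$; since $v_{i+1}$ contains $G_{i+1}$, the leader is not alone. In round 7, because $alone=false$, the leader does not return. Consequently at the end of round 7 no net change has occurred at $v_i$, so the non-leader robots at $v_i$ read $increase=false$ in round 8 and, by the activemerge rule of Algorithm~\ref{algo:AM}, step forward through port 1 to $v_{i+1}$. Symmetrically, the robots of $G_{i+1}$ at $v_{i+1}$ perceive the leader's arrival and set $status=activedisperse$, so the induction hypothesis is re-established one node further along the chain.

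For the terminating phase (phase $p$), once the leader stands at $v_p$ together with the fully merged chain, round 6 sends the leader to $v_{p+1}$, which is empty by definition of a chain. In round 7 the leader therefore finds $alone=true$, returns to $v_p$ through port 0, and sets $status=activedisperse$. The non-leader robots at $v_p$ then detect $increase=true$ in round 8 (the leader has just reappeared at the end of round 7) and likewise update $status=activedisperse$. Hence ActiveMerge terminates after at most $p$ phases. Because each phase consists of a fixed number of rounds (nineteen in total), this yields the claimed $O(p)$ bound.

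The main obstacle I anticipate is the careful bookkeeping of the $increase$ and $decrease$ signals for the as-yet-unvisited groups $G_{i+2}, \ldots, G_p$: one must argue that in each phase they do not prematurely drift forward before the leader reaches them, and that the arrivals and departures they observe are consistent with the invariant. A round-by-round case analysis, distinguishing whether net arrivals and departures at a node cancel, forms the technical core of the argument; it is analogous in style to the case splits already employed in the proof of Lemma~\ref{leader_election_a}.
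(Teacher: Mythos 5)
Your overall strategy --- a phase-by-phase induction in which the leader sweeps forward one node per phase, the already-absorbed robots trail it, and the process terminates when the leader steps onto the empty node $succ(v_p)$ --- is the same sweep argument the paper gives, just organized as an explicit invariant. However, your inductive step contains a claim that is both inconsistent with Algorithm~\ref{algo:AM} and fatal to the invariant you are maintaining: you assert that in phase $i$ the robots of $G_{i+1}$ ``perceive the leader's arrival and set $status=activedisperse$.'' In the algorithm, a non-leader sets $status=activedisperse$ only upon reading $increase=true$ in round 8, and that signal is meant to detect the leader's \emph{return} in round 7, which happens only when the leader has found an empty node, i.e., when merging is finished. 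The leader's mere arrival at $v_{i+1}$ in round 6 makes $increase=true$ in round 7, not in round 8; by round 8 the flag has been reset. Moreover, if $G_{i+1}$ really did switch to $activedisperse$ in phase $i$, it would stop executing $ActiveMerge$ and would not follow the leader in round 8 of phase $i+1$, so your invariant could not be re-established at phase $i+2$ and the chain would never finish merging. The intended behaviour is the opposite: $G_{i+1}$ must remain in $activemerge$ and be absorbed into the moving mass, with $activedisperse$ triggered only by the leader's return at the end of the chain.

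Second, you explicitly defer what you yourself call the ``technical core'': showing that the not-yet-reached groups $G_{i+2},\ldots,G_p$ do not drift forward, given that they too read $leader=false$ and $increase=false$ in round 8 and the round-8 rule tells exactly such robots to move through port 1. This is the point where the argument needs substance --- without it one cannot rule out that the whole chain simply translates one node per phase so that the leader never gains on the empty node --- and neither your invariant nor your sketched case analysis addresses it. As it stands, the proposal reproduces the paper's narrative of rounds 6--8 for the group at the leader's current position, but the inductive step is not actually established.
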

\begin{proof}
Initially, all the robots would update $status=ActiveMerge$. By the definition of a chain of groups, at least one empty node must exist between two chains of groups. From Lemma~\ref{leader_election_a} and \ref{leader_election_b}, it follows that within $O(MaxSize)$ rounds, a unique leader (say $L$) would be elected for the chain $C_j$. Note that the elected leader $L$ would remain at the node $v_1$ till $MaxSize$ phases are completed. Next, the leader would move forward in round 6 (steps 2-3). If $L$ identifies itself alone in round 7 (steps 5-6), it returns and terminates merging by updating $status=ActiveDisperse$. In round 8 (steps 8-9), if the non-leader robots identify $increase=false$, they follow the leader by moving forward. If the non-leader robots identify $increase=true$ in round 8 (steps 10-11), they identify that the leader $L$ has returned and terminated the merging. They also terminate the merging and update $status=ActiveDisperse$. As there is a total $k$ number of robots, at most, $k$ rounds are required to complete the merging of robots. As a consequence, all the robots in the chain of groups $C_j$ would merge into $G=G_1\cup G_2\cup\ldots\cup G_p$ within $O(MaxSize+k)$ synchronous rounds.

\end{proof}

The following lemma shows that the robots from two different chain remains separated and merged to different groups before they starts executing $ActiveDisperse$.

\begin{lemma}\label{merge-complete_b}
Let $C_i=\lbrace H_1,H_2,\ldots,H_q\rbrace$ and $C_j=\lbrace G_1,G_2,\ldots,G_p\rbrace$ be two chain of groups such that $u_q=pred(w_1), w_1=pred(w_2),\ldots w_l=pred(v_1)$ where $n-2\geq l\geq 1$. During the execution of subroutines $LeaderElection$ and $ActiveMerge$ no two robots from different chains would be co-located.
\end{lemma}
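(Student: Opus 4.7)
The plan is to argue, by direct inspection of Algorithms~\ref{algo:LE} and~\ref{algo:AM}, that in every round of $LeaderElection$ and $ActiveMerge$ the set of nodes occupied by $C_i$-robots is disjoint from the set occupied by $C_j$-robots. The argument naturally splits into a procedure-by-procedure reachability analysis, with the boundary empty nodes $w_1$ and $w_l$ being the only candidate collision sites.

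First I would handle $LeaderElection$. A round-by-round reading of Algorithm~\ref{algo:LE} shows that within a single phase a robot whose home node is $v$ can only occupy nodes in $\{pred(v),\,v,\,succ(v)\}$: in rounds 1--3 it is at $v$ or $succ(v)$ (it may move through Port 1 in round 1 or 2 and returns through Port 0 in round 3), and in rounds 4--5 it is at $v$ or $pred(v)$. Hence among the robots of $C_i$ only those in $H_q$ can leave $\{u_1,\ldots,u_q\}$, and they can only visit $w_1=succ(u_q)$, and only during rounds 1--3. Dually, only the robots in $G_1$ of $C_j$ can leave $\{v_1,\ldots,v_p\}$, and they can only visit $w_l=pred(v_1)$, and only during rounds 4--5. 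If $l\geq 2$ then $w_1\neq w_l$ and the visited sets are trivially disjoint; if $l=1$ then $w_1=w_l$ is the only shared candidate node but the $C_i$-excursions fall in rounds 1--3 while the $C_j$-excursions fall in rounds 4--5, so the visits are temporally disjoint.

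Next I would turn to $ActiveMerge$, invoking Lemmas~\ref{leader_election_a}, \ref{leader_election_b}, and~\ref{merge-complete}. After $LeaderElection$ each chain has a unique leader at its first node, and inspection of Algorithm~\ref{algo:AM} shows that in any phase the leader advances through Port 1 in round 6 and returns through Port 0 in round 7 only if it finds itself alone, while non-leaders only advance through Port 1 in round 8 when no leader-return was detected. Tracking the movement phase-by-phase as in the proof of Lemma~\ref{merge-complete}, the leader $L_i$ of $C_i$ walks successively from $u_1$ through $u_q$ and ultimately visits $w_1$, where being alone triggers an immediate return to $u_q$; non-leaders of $C_i$ stop at $u_q$ because they detect $L_i$'s return in the very round they would otherwise advance. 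Hence every $C_i$-robot stays within $\{u_1,\ldots,u_q,\,w_1\}$ throughout $ActiveMerge$. Symmetrically, every $C_j$-robot remains in $\{v_1,\ldots,v_p,\,succ(v_p)\}$, and crucially no $C_j$-robot ever moves backward to $pred(v_1)=w_l$.

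The main obstacle is the case $l=1$ in $ActiveMerge$: here $L_i$ does momentarily occupy $w_1=pred(v_1)$, so one must rule out any $C_j$-robot sitting on the same node at the same time. Phases are synchronous across the whole ring, but $C_i$ and $C_j$ may be at different phase counts of $ActiveMerge$ at that moment, so a single-phase comparison does not suffice. The decisive observation is the uniform containment derived above: in \emph{every} round of \emph{every} phase of $ActiveMerge$, every $C_j$-robot lies in $\{v_1,\ldots,v_p,\,succ(v_p)\}$, a set that never contains $w_l$. Combining this with the $LeaderElection$ analysis completes the proof.
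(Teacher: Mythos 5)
Your proposal is correct and follows essentially the same route as the paper's own proof: a reachability analysis showing that during $LeaderElection$ the $C_i$-robots can only reach $w_1$ in rounds 1--3 while the $C_j$-robots can only reach $w_l$ in rounds 4--5 (spatial disjointness for $l\geq 2$, temporal disjointness for $l=1$), and that during $ActiveMerge$ only the leader of $C_i$ ever steps onto $w_1$ while no $C_j$-robot ever moves backward to $w_l$. Your explicit remark that the containment must hold uniformly across phases (since the two chains may finish merging at different times) is a point the paper leaves implicit, but it does not change the argument.
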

\begin{proof}

From Lemma~\ref{merge-complete}, it is guaranteed that within $O(MaxSize+k)$ synchronous rounds, all the robots in $C_i$ and $C_j$ would be merged, respectively. First, consider that $l=1$. We only need to show that during the execution of $LeaderElection$ and $ActiveMerge$, robots from different chains would remain separated. This is ensured for the following reasons:
\begin{enumerate}
\item Execution of $LeaderElection$: The robots (from the chain $C_i$) at $u_q$ would move to node $w_1$ in rounds 1-2 (steps 1-8 of Algorithm~\ref{algo:LE}). On the other hand, the robots (from the chain $C_j$) at $v_1$ would move to the node $w_1$ in rounds 4-5 (steps 13-21 of Algorithm~\ref{algo:LE}). Thus, robots from the chain $C_i$ and $C_j$ robots would not be co-located.
\item Execution of $ActiveMerge$: The leader (from the chain $C_i$) would move to the node $w_1$ (round 6, steps 2-4 of Algorithm~\ref{algo:AM}) and terminate the merging of groups at the node $u_q$ (round 7, steps 5-6 of Algorithm~\ref{algo:AM}). All the robots from the chain $C_j$ would not move to the node $w_1$ in rounds 6-8 (steps 2-12 of Algorithm~\ref{algo:AM}) during the execution of $ActiveMerge$. The leader and non-leader robots of the chain $C_j$ would similarly merge at the node $v_p$. Thus, robots from the chain $C_i$ and $C_j$ would not be co-located. 
\end{enumerate}
When $l\geq 2$, the robots from chain $C_i$ would move to $w_1$, and the robots from chain $C_j$ would move to $w_l$. As a consequence, robots from different chains would not be co-located.
\end{proof}
\begin{observation}
If a robot $M$ updates $status=ActiveDisperse$ from $status=ActiveMerge$ at the end of phase $t$, then for all the successive phases, $M$ would never update its status to $ActiveMerge$.
\end{observation}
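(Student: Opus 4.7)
The plan is to verify the observation by a direct inspection of the state-transition structure encoded in the subroutines of $AlgorithmMultiStart$. The claim is essentially a syntactic property of the algorithm: once the value $ActiveMerge$ has been overwritten by $ActiveDisperse$, no later line of code ever writes $ActiveMerge$ back into the $status$ field.

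First, I would identify the complete set of subroutines that a robot $M$ can execute after the transition at the end of phase $t$. Immediately after this transition, $M$ has $status=ActiveDisperse$, so in phase $t+1$ it calls $ActiveDisperse(M)$. By examining Algorithm~\ref{activedisperse}, the only values ever assigned to $status$ in that procedure are $passive$, $wait$, $activedisperse$, and $idle$; the string $activemerge$ does not appear on any right-hand side. Hence after phase $t+1$ the status of $M$ lies in $\{activedisperse, passive, wait, idle\}$.

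I would then carry out the same inspection for the remaining procedures reachable from this set. For $Passive(M)$ (Algorithm~\ref{passive}) the only status assignments are to $activedisperse$ or $jump$; for $Jump(M)$ (Algorithm~\ref{jump}) the only assignments are to $activedisperse$ or $wait$; and for $Wait(M)$ (Algorithm~\ref{wait}) the only assignment is to $passive$. A robot with $status=idle$ never participates in any algorithm, so it performs no updates at all. In particular, none of these procedures ever sets $status=activemerge$, and none of them returns $M$ to the $leaderelection$ state either.

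By induction on the phase index, starting from phase $t+1$, the status of $M$ remains in the closed set $\{activedisperse, passive, wait, jump, idle\}$ for every subsequent phase, and therefore $M$ never calls $ActiveMerge(M)$ again and never has $status=activemerge$ in any phase after $t$. The argument is essentially a reachability check on a small finite state machine, so I do not anticipate any obstacle beyond confirming, line by line, that the pseudocode contains no stray assignment to $activemerge$ within the four procedures above.
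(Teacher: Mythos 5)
Your proof is correct; the paper states this as an observation with no explicit proof, and your line-by-line reachability check of the status assignments in $ActiveDisperse$, $Passive$, $Jump$, and $Wait$ (showing the set $\{activedisperse, passive, wait, jump, idle\}$ is closed and never writes $activemerge$ or $leaderelection$) is precisely the code-inspection argument the observation implicitly relies on. One negligible slip: $Passive(M)$ also assigns $status=passive$ in round 12, which you omit from its list of assignments, but since $passive$ already lies in your closed set this changes nothing.
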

\begin{observation}\label{occupy_one}
If a group of robots $G_i$ located at node $v_i$ updates $status=ActiveDisperse$ in phase $t$, the robots will occupy at most one vacant successive node in phase $t'>t$ when a split occurs.
\end{observation}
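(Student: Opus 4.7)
The plan is a round-by-round inspection of Algorithm~\ref{activedisperse}, tracking where the robots of $G_i$ can possibly be at the end of each round during the split phase $t'$. The key point is that the bit-based split rule appears only in Round 13: a robot of $G_i$ with last bit 1 moves through port 1, a robot with last bit 0 stays, and by definition a split occurs exactly when both subsets are non-empty. So immediately after Round 13 the robots of $G_i$ occupy at most the two adjacent nodes $v_i$ and $succ(v_i)$, and the only previously-vacant node among these is $succ(v_i)$.

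Next I would verify that none of Rounds 14--19 sends a robot of $G_i$ beyond $succ(v_i)$ in the same phase. The ``informing'' hop in Round 14 is a single-edge port-1 move taken by the robots of $G_i$ that stayed at $v_i$ (those with $move=0$ and $decrease=true$), placing them on $succ(v_i)$; Round 15 then sends them back to $v_i$ via port 0. Rounds 15, 16, and 19 for an active robot consist only of port-0 moves or local status updates. The port-1 move in Round 18 is guarded by $settle=1$, which by Round 13 is set only in the alone-then-alone branch and is therefore incompatible with a split happening in the same phase. Consequently, at the end of $t'$ every robot of $G_i$ lies on $v_i$ or on $succ(v_i)$, proving the claim.

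The most delicate point is Round 17, since the high-level description mentions a port-1 move (the $passive$-to-$jump$ transition). I would check that in Algorithm~\ref{activedisperse} the Round-17 branches triggered by $move=1$ only assign status $wait$ or $activedisperse$ without any movement, so no robot of $G_i$ -- which started $t'$ with status $activedisperse$ -- crosses a forward edge in Round 17. The forward jump in Round 17 is the responsibility of Algorithm~\ref{passive} and is unreachable within $t'$ for the currently-splitting group. Once this case is discharged, the observation follows directly, and the new occupancy caused by the split is confined to the single successor node $succ(v_i)$.
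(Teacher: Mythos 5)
Your proposal is correct, and in fact the paper offers no proof of this observation at all -- it is stated as a bare \emph{Observation} and invoked later (in Lemma~\ref{merging-chains}) without justification -- so there is no authorial argument to compare against; your round-by-round audit of Algorithm~\ref{activedisperse} supplies exactly the missing reasoning. Your key reductions are sound: the only forward (port-1) moves available to a robot that begins phase $t'$ with status $activedisperse$ are in Rounds 13, 14, and 18; Rounds 13 and 14 each move a robot from $v_i$ to $succ(v_i)$ only (and the Round-14 movers return via port 0 in Round 15); and the Round-18 move is gated by $settle=1$, which requires $alone=true$ in Round 13 and is therefore mutually exclusive with a split at $v_i$ in that phase (nor can $settle=1$ persist from an earlier phase, since such a robot becomes $idle$ in Round 19 of that earlier phase). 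Your handling of Round 17 is also right: the port-1 jump there belongs to Algorithm~\ref{passive}, which is executed only by robots whose status at the start of the phase is $passive$, not by the splitting group. The one caveat worth recording is scope: your argument establishes the statement as literally written, for a single active group at a single node, whereas the way Observation~\ref{occupy_one} is deployed in Lemma~\ref{merging-chains} implicitly needs the stronger claim that the \emph{entire chain} $C_i$ advances its frontier by at most one node per phase; that extension additionally requires checking that a frontier group with status $jump$ also moves only one edge (Round 14 of Algorithm~\ref{jump}), which your argument does not cover but which follows by the same single-edge inspection.
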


According to our proposed algorithm $ActiveMerge$, for two different chains, the merging of the groups in the respective chains will start in the same phase. However, since the length of the chains may be different, the merging may be completed in different rounds. Hence, one may think that when a group (after merging) starts $ActiveDisperse$ and expands by splitting into smaller groups, this expansion may meet the next group that is still merging.  The next lemma shows that this event can not happen. To be specific, when a group of robots, while executing $ActiveDisperse$ meet the next group, by that time, this next group must have started executing $ActiveDisperse$.

\begin{lemma}\label{merging-chains}
Let $C_i=\lbrace H_1,H_2,\ldots,H_q\rbrace$ and $C_j=\lbrace G_1,G_2,\ldots,G_p\rbrace$ be two chain of groups such that $u_q=pred(w_1), w_1=pred(w_2),\ldots, w_l=pred(v_1)$ where $n-2\geq l\geq 1$. If two robots, one from $C_i$, and one from $C_j$ ever become adjacent in some round $i$, then none of them can have $status=activemerge$ in round $i$. 
\end{lemma}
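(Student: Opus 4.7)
The plan is to exploit the lockstep progression of $LeaderElection$ and $ActiveMerge$ across both chains and to bound the region each chain can occupy while any of its robots still holds $status=activemerge$.

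All robots begin phase $1$ with $status=leaderelection$ and the same initial bit index, so the round-$5$ update rule of Algorithm~\ref{algo:LE} changes every robot's status to $activemerge$ simultaneously at the end of phase $MaxSize$. Hence both chains start executing Algorithm~\ref{algo:AM} together at phase $MaxSize+1$. I would then establish, by a direct trace of Algorithm~\ref{algo:AM}, the following positional invariant: throughout $ActiveMerge$, the robots of $C_j$ remain inside $\{v_1,\ldots,v_p,\mathrm{succ}(v_p)\}$ and those of $C_i$ remain inside $\{u_1,\ldots,u_q,w_1\}$, with the outermost nodes $\mathrm{succ}(v_p)$ and $w_1$ entered only transiently by the tail leader during rounds $6$--$7$ of a single phase. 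The invariant holds because rounds $1$--$5$ and $9$--$19$ are no-ops for $activemerge$ robots, round $6$ advances only the leaders by one edge, round $7$ returns those tail leaders that find themselves alone on an empty successor (and simultaneously promotes them to $activedisperse$), and round $8$ moves non-leaders by one edge only when no leader returned to their node, so non-leaders never step past the tail of their chain.

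Combining the invariant with $l\geq 1$, any adjacency between a $C_i$-robot at node $x$ and a $C_j$-robot at node $y$ during $ActiveMerge$ must satisfy $x\in\{u_1,\ldots,u_q,w_1\}$, $y\in\{v_1,\ldots,v_p,\mathrm{succ}(v_p)\}$, and $d(x,y)=1$. Because the empty nodes $w_1,\ldots,w_l$ strictly separate $u_q$ from $v_1$, the only pair that can realize such an adjacency is $x=w_1$, $y=v_1$, which forces $l=1$. For every instance with $l\geq 2$ the lemma is therefore immediate from the invariant.

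The main obstacle, and the hardest step, is the boundary case $l=1$. The only configuration that could still violate the claim is a round in which $L''_q$ is transiently parked on $w_1$ while the non-leaders of $G_1$ still occupy $v_1$, and by the trace above this can happen at the end of round $6$ of phase $MaxSize+1$ only. To close this case I would align the movement schedule with the status-update schedule: in round $7$ of the same phase $L''_q$ returns to $u_q$ and simultaneously updates its status to $activedisperse$, while in round $8$ the non-leaders of $G_1$ vacate $v_1$ for $v_2$. Under the convention that the status of a robot during round $r$ is the one carried after all updates executed through round $r-1$, a round-by-round bookkeeping shows that in every round in which a $C_i$-robot and a $C_j$-robot sit on adjacent nodes, at least one of them has already exited $activemerge$. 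This delicate interleaving of moves and status changes, dictated explicitly by Algorithm~\ref{algo:AM}, is what makes the boundary case go through and is the step I expect to occupy most of the proof.
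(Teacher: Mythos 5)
Your proposal misses the scenario that is the actual content of this lemma. Your positional invariant constrains a robot's location only \emph{while} it still holds $status=activemerge$; the moment the robots of $C_i$ finish merging and switch to $activedisperse$, they escape the invariant entirely. But the event the lemma must exclude --- the one the paper explicitly flags in the paragraph introducing it --- is exactly that one: the two chains may need different numbers of phases to merge, so if $q<p$ the chain $C_i$ completes its merge and begins $ActiveDisperse$ about $p-q$ phases before $C_j$ does, and its (no longer $activemerge$) robots then expand forward through $w_1,\dots,w_l$ toward $v_1$ and beyond. If such a robot became adjacent to a robot of $C_j$ that is still merging, that $C_j$ robot would have $status=activemerge$ in that round and the lemma would fail. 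Nothing in your argument rules this out, so your claim that ``for every instance with $l\geq 2$ the lemma is immediate from the invariant'' is unjustified. The paper closes this case with a rate argument that is entirely absent from your proposal: each remaining merge phase of $C_j$ vacates one more node at the front of $C_j$ (the $v_1$ end), while by Observation~\ref{occupy_one} the dispersing group coming from $C_i$ can occupy at most one new vacant node per phase, so the $p-q$ phase head start is exactly absorbed and $C_i$ cannot reach a group of $C_j$ before $C_j$ has finished merging. The complementary case $p\le q$ is disposed of by noting that $C_j$ finishes no later than $C_i$, so nothing in $C_j$ still carries $activemerge$ by the time $C_i$ is able to expand.

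Two further points. First, your closing claim for the $l=1$ boundary case --- that ``at least one of them has already exited $activemerge$'' --- proves something strictly weaker than the lemma, which requires that \emph{neither} robot has $status=activemerge$. Second, the transient you single out as the crux (the leader of $C_i$ parked on $w_1$ during rounds 6--7 while $G_1$ still occupies $v_1$) arises only when $C_i$ merges in its very first $ActiveMerge$ phase with $l=1$, and in that round \emph{both} robots still carry $status=activemerge$, so the bookkeeping you sketch cannot resolve it in the direction the lemma needs; it is a genuine round-granularity subtlety that the paper's phase-level proof also glosses over, but identifying it does not substitute for the missing rate argument, which is where the proof actually lives.
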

\begin{proof}
During the execution of $ActiveMerge$, all the robots would merge groups after $MaxSize$ rounds. If a chain of groups selects a leader before $MaxSize$ rounds, it would only start merging groups after $MaxSize$ rounds. Consider the case when $p\leq q$. As the chain $C_i$ has fewer groups of robots, the robots in $C_j$ would be guaranteed to merge before the robots would complete in $C_i$. Thus, a group of robots from $C_j$ would be merged to $C_i$ by ensuring that none of the robots have $status=activemerge$. Assume that $q<p$. Let $t$ be the earliest possible phase when the merging of $C_i$ is complete. At phase $t$, the robots of $C_i$ would update $status=activedisperse$. They would start the execution of $ActiveDisperse$ in phase $t+1$. As $q<p$, the robots in $C_j$ still require $p-q$ phases to merge. In each subsequent phase, the groups in $C_j$ would create one empty node until they complete the merging. From Observation~\ref{occupy_one}, the robots of $C_i$ would require at least $p-q$ rounds before it could merge a group of robots from $C_j$. Therefore, a group of robots from $C_j$ would be merged to $C_i$ by ensuring that none of the robots have $status=activemerge$. 

\end{proof}

\begin{remark}\label{lemma:not-move-back}
Let $M\in G_i$ be a robot of some chain $C_j$ that is present at a node $v_i$ at the start of any phase $t$. Consider the case when $M$ with $status=activedisperse$ or $status=passive$ identifies $increase=true$ due to incoming of a leader from a previous chain (say $C_i$). In round 12 (steps 10-11 of Algorithm~\ref{activedisperse} and steps 10-11 of Algorithm~\ref{passive}), $M$ would move to $pred(v_i)$ and update $status=passive$ in phase $t$. By moving to $pred(v_i)$ it becomes a part of the chain of groups $C_i$. For the incoming of a leader for each of the previous chain of groups, a robot $M$ would update $status=passive$. Thus, there can be at most $k$ number of such updates of $status=passive$. Otherwise, $M$ would either stay at $v_i$ or move to $succ(v_i)$ as discussed below:
\begin{enumerate}
    \item $M$ with $status=activedisperse$ would execute $ActiveDisperse$. During any execution of$ActiveDisperse$, there are two possible cases: a split happens, or no split happens. In both cases, the robots would only move to $succ(v_i)$. 
    \item $M$ with $status=passive$ would remain at $v_i$ by the execution of $Passive$. 
    \item $M$ with $status=jump$ would execute $Jump$. The robots would move to  $succ(v_i)$.
    \item $M$ with $status=waiting$ would remain at $v_i$ and wait for other robots to move forward.
\end{enumerate}
\end{remark}

The correctness of our algorithm depends on the fact that two consecutive groups of robots will not simultaneously splits into  smaller groups. This fact is proved in the next lemma. 

\begin{lemma}\label{lem:robot-possibility} 
During the execution of the algorithm $ActiveDisperse$, at the beginning of some phase $t$, let $w_1$ and $w_2$ be two consecutive occupied nodes. Let $M_1$ be a robot at $w_1$ and $M_2$ be a robot at $w_2$. Then exactly one of the following statement is true.
\begin{enumerate}[a.]
\item The status of $M_1$ is in \{activedisperse,wait,jump\} and the status of $M_2$ is passive.
\item The status of $M_2$ is in \{activedisperse,wait,jump\} and the status of $M_1$ is passive.
\end{enumerate}

\end{lemma}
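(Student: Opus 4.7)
The plan is to prove the lemma by induction on the phase number $t$, taking as the base case the first phase in which any robot executes $ActiveDisperse$. At that base phase, Lemmas~\ref{merge-complete} and~\ref{merge-complete_b} together imply that each chain has collapsed to a single isolated group, and that merged groups of distinct chains are still separated by at least one empty node (the merging procedure moves each chain to its rightmost node without ever crossing a gap). Hence no two occupied nodes are consecutive at the start of the first $ActiveDisperse$ phase, and the statement holds vacuously.

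For the inductive step, assume the invariant holds at the start of phase $t$. I would first compile, from Algorithms~\ref{activedisperse}--\ref{wait}, a transition table listing the status a robot has at the start of phase $t+1$ as a function of (i) its status at the start of $t$ and (ii) the local events observable during $t$, namely the last bit of its label, whether it detects $alone$, and whether it detects $increase$ or $decrease$. The salient entries are: a $passive$ robot becomes $activedisperse$ in round~17 unless an active neighbour splits into it, in which case it moves back in round~16 and emerges as $jump$; an $activedisperse$ robot that initiates a split either moves into an empty successor and stays $activedisperse$, or lands on a passive successor and becomes $wait$; and every $wait$ robot turns $passive$ by round~17.

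With this table in hand, I would case-analyse each pair of adjacent occupied nodes $(w_1,w_2)$ at the start of phase $t+1$ according to how the occupancy arose during phase $t$: either both $w_1$ and $w_2$ were already occupied at the start of $t$, or exactly one of them was, with the other filled by newcomers arriving from its predecessor (the algorithm moves a robot at most one edge per round, so no other possibility exists). In the first subcase the inductive hypothesis supplies the earlier statuses, and applying the transition table on each side verifies that the alternation is preserved; notably, $activedisperse$ spilling forward into a $passive$ sibling forces the $passive$ residents back through rounds 15--16, yielding the correct $\{activedisperse,wait,jump\}$-versus-$passive$ configuration at the next phase boundary. In the second subcase, the only robots that can newly occupy an empty node are those executing rounds~13, 14 or~17, and in each case the transition table pins down the incoming status while the other node's status is determined by the IH and the same table.

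The main obstacle will be the leader-advance mechanism of rounds 9--12 interacting with Remark~\ref{lemma:not-move-back}: when a leader from one chain reaches the next chain, an entire group is converted to $passive$ and displaced backward, effectively splicing two independent alternating chains into one. To handle this cleanly I would argue that each chain maintains a well-defined activity parity anchored at its leader, and that the leader's arrival at the next chain enforces the correct parity match, because the converted group ends up adjacent to the leader's original chain on one side and to a surviving node of the new chain on the other; both interfaces can be checked to obey the invariant by direct inspection of rounds 9--12 together with round~17, thereby reducing the splicing argument to the already-handled subcases above.
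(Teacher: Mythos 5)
Your proposal is correct and follows essentially the same route as the paper: induction on phases, with the base case supplied by the merging lemmas (isolated single groups after $ActiveMerge$) and the inductive step carried out by a case analysis of the status transitions on consecutive occupied nodes, with the leader-advance/chain-splicing interaction treated as a separate case that re-establishes the alternation by forcing the encountered group to become $passive$. The only difference is organizational (your pairwise transition-table framing versus the paper's enumeration of status patterns on triples $v_i,v_{i+1},v_{i+2}$), not substantive.
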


\begin{proof}

{\it We prove the statements using mathematical induction.} \\
\noindent {\bf Base case:} Let $v_1$ be a node containing a group of robots at beginning of $ActiveDisperse$. Since by the definition of chain of groups, both the nodes $succ(v_1)$ and $pred(v_1)$ are empty after the chain merged to a single group at $v_1$, the statements are vacuously true. Consider the case when the leader present at $v_1$ founds a group from the successive chain of groups by moving two steps forward. The group of robots would move backward to the node $succ(v_1)$ and become $passive$. As a result, the statement $a$ is true at the beginning.

\noindent {\bf Inductive step:} Assume that one of the statements is true at the end of any phase $m>t$. We will prove that the statements must be true at the end of $(m+1)^{th}$ phase. From Remark \ref{lemma:not-move-back}, it follows that if any robot is present at a node $v_i$ at the end of phase $t$, then at the end of phase $t+1$ the robot can either stay at $v_i$ or move to $v_{i+1}$. Thus, we consider the cases of  robots with different status present at three consecutive nodes (say $v_i, v_{i+1}, v_{i+2}$) at the end of phase $m$. We have the following cases:
\begin{figure}[ht] 
\begin{center}
\subfloat[]
{\includegraphics[width=.28\textwidth]{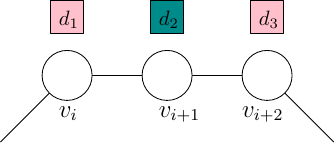}
\label{fig:ring-configa-1}}\hspace*{.5cm}
\subfloat[ ]
{\includegraphics[width=.28\textwidth]{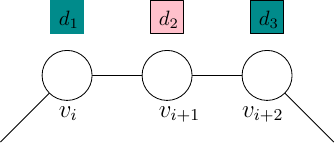}
\label{fig:ring-configa-2}\hspace*{.5cm}
}
\subfloat[ ]
{\includegraphics[width=.28\textwidth]{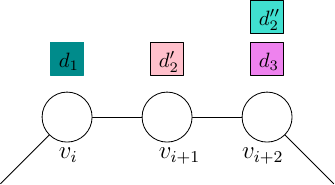}
\label{fig:ring-configa-3}
}\vspace*{.5cm}
\newline
\subfloat[ ]
{\includegraphics[width=.28\textwidth]{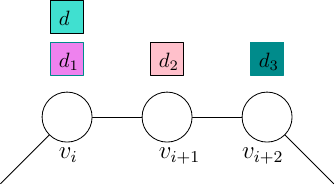}
\label{fig:ring-configa-4}\hspace*{.5cm}
}
\subfloat[ ]
{\includegraphics[width=.28\textwidth]{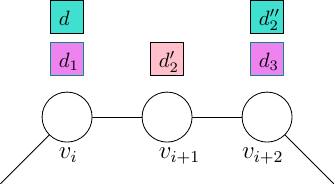}
\label{fig:ring-configa-5}
}
\end{center}
\caption{ Representation of the status of robots positioned on three consecutive nodes $v_i$, $v_{i+1}$, and $v_{i+2}$ (a) $passive$ (pink), $activedisperse$ (green) and $passive$ (pink), (b) $activedisperse$, $passive$, and $activedisperse$, (c) $activedisperse$, $passive$ and $wait$ (turquoise) and $jump$ (violet), (d) $wait$ and $jump$, $passive$ and $activedisperse$, (e) $wait$ and $jump$, $passive$ and $wait$ and $jump$. }
\label{fig:ring-configa}
\end{figure}
    
\setcounter{case}{0}

\begin{case}\normalfont
\textbf{The nodes $\boldsymbol{v_i}$, $\boldsymbol{v_{i+1}}$, and $\boldsymbol{v_{i+2}}$ contains $\boldsymbol{passive}$, $\boldsymbol{activedisperse}$, and $\boldsymbol{passive}$ robots, respectively.} Let there be $d_1$, $d_2$, and $d_3$ robots at the nodes $v_i$, $v_{i+1}$, and $v_{i+2}$, respectively, at the end of phase $m$ (see Figure \ref{fig:ring-configa-1}). At the end of phase $m+1$, the following sub-cases are to be considered:
\begin{enumerate}
\item {\bf No robot arrived at $v_i$ at the end of phase $m$.} Thus, all $d_1$ robots at $v_i$ become $activedisperse$ and remain at $v_i$. There are two possible scenarios at the node $v_{i+1}$:
\begin{enumerate}
\item No split happens at $v_{i+1}$. All the $d_2$ robots at $v_{i+1}$ remain at $v_{i+1}$ and their statuses become $passive$. Since no robots reach at $v_{i+2}$, all the $d_3$ robots at $v_{i+2}$ remain at $v_{i+2}$ and their statuses become $activedisperse$ (Figure \ref{fig:ring-configa-2}).
        
\item A split happens at $v_{i+1}$. A subset of $d_2$ robots (say $d'_2$) remain at $v_{i+1}$ and their statuses become $passive$. Since $d''_2 = d-d'_2$  robots reach at $v_{i+2}$, the $d''_2$ robots reach at $v_{i+2}$ and their statuses become $waiting$. Further, all the $d_3$ robots at $v_{i+2}$ remain at $v_{i+2}$ and their statuses become $jump$ (Figure \ref{fig:ring-configa-3}).
\end{enumerate}

\item {\bf Some robots reach at $v_i$ at the end of phase $m$.} Suppose $d$ robots reach at $v_i$ at the end of phase $m$. They update their status to $waiting$. All the $d_1$ robots at $v_i$ remain at $v_i$ and update their status to $jump$. There are two possible scenarios at the node $v_{i+1}$:
\begin{enumerate}
\item No split happens at $v_{i+1}$. All the $d_2$ robots at $v_{i+1}$ remain at $v_{i+1}$ and their statuses become $passive$. Since no robots reach at $v_{i+2}$, all the $d_3$ robots at $v_{i+2}$ remain at $v_{i+2}$ and their statuses become $activedisperse$ (Figure \ref{fig:ring-configa-4}).   
\item A split happens at $v_{i+1}$. Therefore, a subset of $d_2$ robots (say $d'_2$) remain at $v_{i+1}$ and their statuses become $passive$. As $d''_2 = d-d'_2$ robots reach at $v_{i+2}$, the $d''_2$ robots reach at $v_{i+2}$ and their statuses become $waiting$. Further, all the $d_3$ robots at $v_{i+2}$ remain at $v_{i+2}$ and their statuses become $jump$ (Figure \ref{fig:ring-configa-5}).  
\end{enumerate}
\end{enumerate}
\end{case}
\begin{figure}[ht] 
\begin{center}
\subfloat[ ]{\includegraphics[width=.3\textwidth]{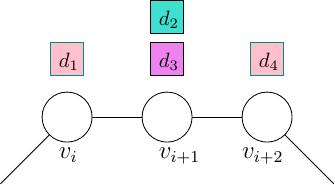}
\label{fig:ring-configb-1}
}
\subfloat[ ]{\includegraphics[width=.3\textwidth]{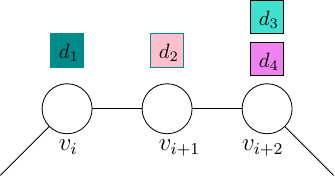}
\label{fig:ring-configb-2}
}
\subfloat[ ]{\includegraphics[width=.3\textwidth]{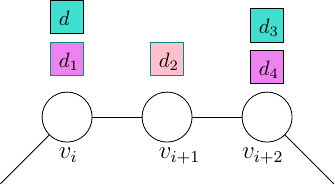}
\label{fig:ring-configb-3}
}
\end{center}
\caption{Representation of the status of robots positioned on three consecutive nodes $v_i$, $v_{i+1}$, and $v_{i+2}$ (a) $passive$, $wait$ and $jump$ and $passive$, (b) $activedisperse$, $passive$, and $wait$ and $jump$, (c) $wait$ and $jump$, $passive$, and $wait$ and $jump$. }
\label{fig:ring-configb}
\end{figure}
\begin{case}\normalfont
\textbf{The nodes \boldmath{$v_i, v_{i+1},\;and\; v_{i+2}$ contain $passive$, $\{jump,wait\}$, and $passive$ robots, respectively.}} Let the nodes $v_i, v_{i+1},v_{i+2}$ contain $d_1\;passive$ robots, $v_{i+1}$ contain $d_2\;waiting\;robots\;and\;d_3\; jump$ robots,  and $v_{i+2}$ contains $d_4$ $passive$ robots at the end of phase $m$ (Figure \ref{fig:ring-configb-1}). At the end of phase $m+1$, we have the following cases:
\begin{enumerate}
\item {\bf No robot arrived at $\boldsymbol{v_i}$ at the end of phase $\boldsymbol{m}$.} The $d_1$ robots at $v_i$ would update $status=activedisperse$. At the node $v_{i+1}$, the $d_2$ robots would stay and update $status=passive$, and the $d_3$ robots would move to the node $v_{i+2}$ and update status to $wait$. At the node $v_{i+2}$, the $d_4$ robots would update $status=jump$.
\item {\bf Some robots arrived at $\boldsymbol{v_i}$ at the end of phase $\boldsymbol{m}$.} Suppose $d$ robots arrive at $v_i$ at the end of phase $m$. The $d$ robots become $wait$ and the $d_1$ robots update $status=jump$. At the node $v_{i+1}$, the $d_2$ robots would stay and update $status=passive$, and the $d_3$ robots would move to the node $v_{i+2}$ and update status to $wait$. At the node $v_{i+2}$, the $d_4$ robots would update $status=jump$.
\end{enumerate}
\end{case}
\begin{figure}[ht] 
\begin{center}
\subfloat[ ]{\includegraphics[width=.3\textwidth]{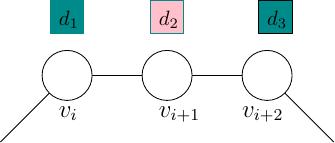}
\label{fig:ring-configc-1}
}
\subfloat[ ]{\includegraphics[width=.3\textwidth]{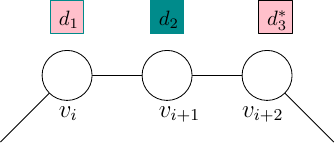}
\label{fig:ring-configc-2}
}
\subfloat[ ]{\includegraphics[width=.3\textwidth]{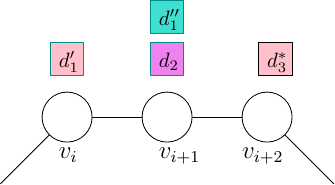}
\label{fig:ring-configc-3}
}
\end{center}
\caption{Representation of the status of robots positioned on three consecutive nodes $v_i$, $v_{i+1}$, and $v_{i+2}$ (a) $activedisperse$, $passive$, and $activedisperse$, (b) $passive$, $activedisperse$ and $passive$, (c) $passive$, $wait$ and $jump$, and $passive$.}
\label{fig:ring-configc}
\end{figure}
\begin{case}\normalfont
\textbf{The nodes \boldmath{$v_i, v_{i+1},\;and \; v_{i+2}$ contain $activedisperse$, $passive$, and $activedisperse$ nodes, respectively.}} Let there be $d_1$, $d_2$, and $d_3$ robots at the nodes $v_i$, $v_{i+1}$, and $v_{i+2}$, respectively, at the end of phase $m$. (Figure~\ref{fig:ring-configc}). Since the robots at $v_i$ are $activedisperse$, by the induction step, there would not be any incoming robots at $v_i$ (robots at $pred(v_i)$ must be $passive$). If there is no split at $v_i$, all the robots at $v_i$ become $passive$ at the beginning of $(m+1)^{th}$ phase. Consider the case when there is a split at $v_i$. A subset of $d_1$ robots (say $d'_1$) remain at $v_i$, and their statuses become $passive$. Since $d''_1 = d-d'_1$ robots reach at $v_{i+1}$, the $d''_1$ robots reach at $v_{i+1}$ and their statuses become $waiting$. The $d_2$ robots at $v_{i+1}$ become $jump$ for the $(m+1)^{th}$ phase. By the inductive step, $succ(v_{i+2})$ must be $passive$. The cases for the nodes $v_{i+2}$ and $succ(v_{i+2})$ would be similar to the cases for the nodes $v_i$ and $v_{i+1}$.
\end{case}
\begin{figure}[ht] 
\begin{center}
\subfloat[]{\includegraphics[width=.3\textwidth]{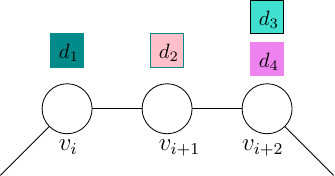}
\label{fig:ring-configd-1}
}
\subfloat[]{\includegraphics[width=.3\textwidth]{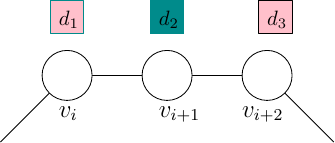}
\label{fig:ring-configd-2}
}
\subfloat[]{\includegraphics[width=.3\textwidth]{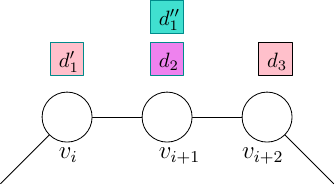}
\label{fig:ring-configd-3}
}
\end{center}
\caption{Representation of the status of robots positioned on three consecutive nodes $v_i$, $v_{i+1}$, and $v_{i+2}$ (a) $activedisperse$, $passive$, and $wait$ and $jump$, (b) $passive$, $activedisperse$ and $passive$, (c) $passive$, $wait$ and $jump$, and $passive$.}
\label{fig:ring-configd}
\end{figure}
\begin{case}\normalfont
    \textbf{The nodes \boldmath{$v_i, v_{i+1}, v_{i+2}$ contain $activedisperse$, $passive$ and $\{jump, wait \}$ robots, respectively.}} Let the nodes $v_i, v_{i+1},v_{i+2}$ contain $d_1\;activedisperse$ robots, $v_{i+1}$ contain $d_2\; passive$ robots, and $v_{i+2}$ contains $d_3\; jump$ robots, and $d_4\;wait$ robots at the end of phase $m$ (Figure \ref{fig:ring-configd}). Since the robots at $v_i$ are $activedisperse$, by the induction step, there would not be any incoming robots at $v_i$ (robots at $pred(v_i)$ must be $passive$). If there is no split at $v_i$, all the robots at $v_i$ become $passive$ at the beginning of $(m+1)^{th}$ phase. In this case, all the robots at $v_{i+1}$ become $activedisperse$ at the beginning of $(m+1)^{th}$ phase. Consider the case when there is a split at $v_i$. A subset of $d_1$ robots (say $d'_1$) remain at $v_i$, and their statuses become $passive$. Since $d''_1 = d-d'_1$ robots reach at $v_{i+1}$, the $d''_1$ robots reach at $v_{i+1}$ and their statuses become $waiting$. The $d_2$ robots at $v_{i+1}$ become $jump$ for the $(m+1)^{th}$ phase. Since the robots at $v_{i+1}$ are $passive$, there would not be any incoming robots at $v_{i+2}$. The $d_1$ robots would become $passive$, and $d_2$ robots would move to the node $succ(v_{i+2})$ and update status to $wait$ or $activedisperse$ based on whether the node $succ(v_{i+2})$ is occupied or empty. 
\end{case}
\begin{case}\normalfont
\textbf{The nodes \boldmath{$v_i, v_{i+1}, v_{i+2}$ contain $\{jump, wait \}$, \passive, and $activedisperse$ robots, respectively.}} Let the nodes $v_i, v_{i+1},v_{i+2}$ contain $d_1\;waiting$ robots and $d_2\; jump$ robots, $v_{i+1}$ contain $d_3\; passive$ robots, and $v_{i+2}$ contains $d_4\;activedisperse$ robots at the end of phase $m$ (Figure \ref{fig:ring-confige}). Since the robots at $v_i$ are $jump$ and $waiting$, by the induction step, there would not be any incoming robots at $v_i$ (robots at $pred(v_i)$ must be $passive$). The $d_1$ robots would become $passive$, and $d_2$ robots would move to the node $v_{i+1}$ and update status to $waiting$. The $d_3$ robots at $v_{i+1}$ would update status to $jump$. By the inductive step, $succ(v_{i+2})$ must be $passive$ as $d_4$ robots are $activedisperse$ at $v_{i+2}$. Depending on whether there is a split or not there would be two scenarios at the beginning of $(m+1)^{th}$ phase: $passive$ robots at $v_{i+2}$ and $activedisperse$ robots at $succ(v_{i+2})$, or $passive$ robots at $v_{i+2}$ and $jump$ and $wait$ robots at $succ(v_{i+2})$.
\end{case}

\begin{figure}[ht] 
\begin{center}
\subfloat[ ]{\includegraphics[width=.3\textwidth]{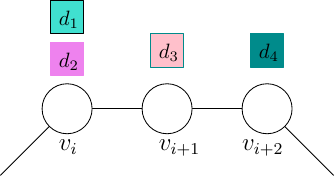}
\label{fig:ring-confige-1}
}
\subfloat[ ]{\includegraphics[width=.3\textwidth]{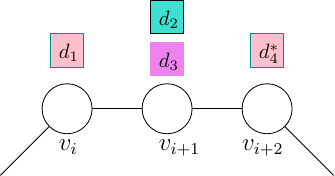}
\label{fig:ring-confige-2}
}
\end{center}
\caption{Representation of the status of robots positioned on three consecutive nodes $v_i$, $v_{i+1}$, and $v_{i+2}$ (a) $wait$ and $jump$, $passive$, and $activedisperse$, (b) $passive$, $wait$ and $jump$, and $passive$.}
\label{fig:ring-confige}
\end{figure}
 \begin{figure}[ht] 
\begin{center}
\subfloat[ ]{\includegraphics[width=.3\textwidth]{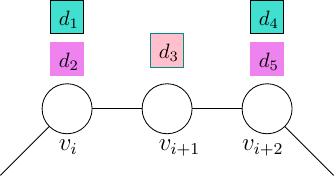}
\label{fig:ring-configf-1}
}
\subfloat[ ]{\includegraphics[width=.3\textwidth]{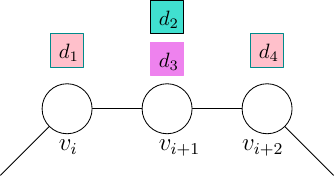}
\label{fig:ring-configf-2}
}
\end{center}
\caption{Representation of the status of robots positioned on three consecutive nodes $v_i$, $v_{i+1}$, and $v_{i+2}$ (a) $wait$ and $jump$, $passive$, and $wait$ and $jump$, (b) $passive$, $wait$ and $jump$, and $passive$.}
\label{fig:ring-configf}
\end{figure}

\begin{case}\normalfont
\textbf{The nodes \boldmath{$v_i, v_{i+1}, v_{i+2}$ contain $\lbrace jump,wait\rbrace$, \passive, and $\lbrace jump,wait\rbrace$ robots, respectively.}} Let the nodes $v_i, v_{i+1},v_{i+2}$ contain $d_1\;waiting$ robots and $d_2\; jump$ robots, $v_{i+1}$ contain $d_3\; passive$ robots, and $v_{i+2}$ contains $d_4\;waiting$ robots and $d_5\; jump$ robots at the end of phase $m$ (Figure \ref{fig:ring-configf}). Since the robots at $v_i$ are \jump~and wait, by the induction step, there would not be any incoming robots at $v_i$ (robots at $pred(v_i)$ must be $passive$). The $d_1$ robots would become $passive$, and $d_2$ robots would move to the node $v_{i+1}$ and update status to $waiting$. The $d_3$ robots at $v_{i+1}$ would update status to $jump$. By the inductive step, $succ(v_{i+2})$ must be $passive$ as $d_4$ robots are $jump$ and $waiting$ at $v_{i+2}$. The cases for the nodes $v_{i+2}$ and $succ(v_{i+2})$ would be similar to the cases for the nodes $v_i$ and $v_{i+1}$.
\end{case}
\begin{case}\normalfont
\textbf{$\boldsymbol{v_i}$ contains $\boldsymbol{activedisperse}$ statused robots with a leader, $\boldsymbol{v_{i+1}}$ is empty, and $v_{i+1}$ contains robots with status $\boldsymbol{activedisperse}$ or $\boldsymbol{passive}$ from a different chain of groups}. A leader present at $v_i$ would move forward only when the other robots at the node $v_i$ have $status=activedisperse$. Consider the case when the leader present at $v_i$ founds the group of robots at $v_{i+2}$ by moving two steps forward. The group of robots would move backward to the node $v_{i+1}$ and become $passive$. The statement $a$ is true at the beginning of phase $m+1$ because we have two adjacent nodes $v_i$ and $v_{i+1}$ where the status of the robots are $activedisperse$ and $passive$, respectively.   
\end{case}
\end{proof}

Two robots are called {\it distinguished} in a phase, if they are either not co-located at the beginning of a phase or if co-located, then their status is not the same.

The following lemma says that once two robots becomes distinguished once, them remains distinguished in the subsequent phases.

\begin{lemma}\label{distinguish_one}
If two robots are distinguished in some phase $t$, they remain distinguished in each phase $t'>t$.
\end{lemma}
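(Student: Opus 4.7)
The plan is to prove the statement by induction on the phase index $t' \geq t$. The base case $t' = t$ is the hypothesis. For the inductive step, assuming $M_1$ and $M_2$ are distinguished at the beginning of phase $t'$, I would argue they remain distinguished at the beginning of phase $t'+1$ by splitting into two situations corresponding to the two clauses in the definition of \emph{distinguished}.

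\textbf{Situation 1:} $M_1$ and $M_2$ are not co-located at the start of phase $t'$. By Remark~\ref{lemma:not-move-back}, each robot in one phase either stays, moves forward by one node, or (only in the rare event of an incoming leader from a preceding chain) moves backward by one node. Hence if $M_1$ at $v_i$ and $M_2$ at $v_j$ satisfy $|i-j|\geq 2$, they cannot be co-located at the start of phase $t'+1$. The delicate subcase is when $M_1$ and $M_2$ are at adjacent nodes. Here I would invoke Lemma~\ref{lem:robot-possibility}, which forces an alternating pattern between $passive$ and $\{activedisperse, wait, jump\}$ on consecutive occupied nodes. A case check then shows that the only way the two robots end up co-located at phase $t'+1$ is when an $activedisperse$ robot advances onto a $passive$ node following a split; by the algorithm (round 17 of \texttt{ActiveDisperse} and \texttt{Passive}), the incoming robot becomes $wait$ while the $passive$ robot becomes $jump$, so the statuses differ and distinction is preserved.

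\textbf{Situation 2:} $M_1$ and $M_2$ are co-located at the start of phase $t'$ with different statuses. Here I would enumerate the admissible pairs of distinct statuses (the set of statuses is finite and many pairs are impossible by Lemma~\ref{lem:robot-possibility} and by the status-transition diagram induced by Algorithms~\ref{activedisperse}--\ref{wait}). For each admissible pair I would trace through the 19 rounds of a phase: for example, $(wait, jump)$ separates in round 14 since $jump$ robots move through port 1 while $wait$ robots remain; $(activedisperse, passive)$ separates in the rounds dedicated to dispersion since only $activedisperse$ robots act on the bit of their label; $(leaderelection, activemerge)$ cannot co-occur at a single node by Lemma~\ref{merging-chains}; and so on. In each case the pair either separates (so that distinction is preserved trivially in phase $t'+1$) or remains co-located with a pair of statuses that is still distinct by the specifications of the status updates.

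The main obstacle will be the exhaustive bookkeeping of Situation~2: keeping track of every admissible ordered pair of statuses and verifying the status update rules for both robots across all 19 rounds. The decisive structural ingredient is Lemma~\ref{lem:robot-possibility}, which sharply reduces the number of configurations one must consider, together with Remark~\ref{lemma:not-move-back}, which restricts movement in a single phase to at most one hop. With these two ingredients in hand, the inductive step reduces to a finite and mechanical, albeit lengthy, case analysis.
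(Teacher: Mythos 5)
Your proposal is correct and its decisive step coincides with the paper's: both arguments reduce the dangerous event to a group advancing onto an occupied node, invoke Lemma~\ref{lem:robot-possibility} to conclude that the resident robots must have been $passive$ while the arrivals were $activedisperse$/$jump$, and then observe that round~17 turns the arrivals into $wait$ and the residents into $jump$, so the co-located robots carry different statuses. The difference is one of scope and structure. The paper gives a single ``separation, then possible re-meeting'' argument: it starts from two robots split apart while processing a bit and shows that if they ever meet again they meet with statuses $wait$ versus $jump$ --- and it stops there, leaving unexamined both the iteration to later phases and the branch of the definition in which the two robots are already co-located with distinct statuses. You instead set up a genuine induction on the phase index and add Situation~2, an explicit (finite, mechanical) case analysis over admissible ordered pairs of distinct statuses at a common node, checking for each pair that the robots either separate within the phase or stay co-located with still-distinct statuses. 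Since ``distinguished'' is defined disjunctively, Situation~2 is actually needed for the statement as written, so your version is the more complete one; the price is the bookkeeping you acknowledge. One caveat applies equally to both proofs: the claim that robots two or more nodes apart cannot meet in one phase relies on forward-only motion, whereas Remark~\ref{lemma:not-move-back} allows a backward step when a leader from a preceding chain arrives; neither you nor the paper rules out a forward-mover and a backward-mover closing a gap of two, and a careful write-up should dispose of that subcase explicitly (e.g., by noting that the backward step sends the whole group to a node the other robot cannot simultaneously occupy).
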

\begin{proof}
Let $M_1$ and $M_2$ be two robots co-located at the node $v_0$ with $j^{th}$ bit 0 and 1, respectively. Suppose $M_1$ and $M_2$ are separated by processing the $j^{th}$ bitin phase $t$ by the execution of $activedisperse$. In the phase $t+1$, $M_1$ is at $v_o$ and $M_2$ is at $succ(v_0)$. As, $M_1$ and $M_2$ are not co-located at the same node, they are distinguished at phase $t+1$. Next, consider the case when $M_1$ and $M_2$ become co-located for some phase $t'>t+1$ at a node (say $v_i$). As the non-leader robots move only forward by one step in each phase, the robot $M_1$ must have moved from the node $pred(v_i)$ to $v_i$ in the phase $t'-1$. Thus, the robots at $pred(v_i)$ have either $status=activedisperse$ or a subset containing $M_1$ have $status=jump$ in the phase $t'-1$. By Lemma~\ref{lem:robot-possibility}, the robots at $v_i$ have $status=passive$ in the phase $t'-1$. In the phase $t'$, all the group of robots which moved to $v_i$ would have $status=wait$ (including $M_1$). The robots which were at the node $v_i$ would have $status=jump$. Thus, $M_1$ and $M_2$ are distinguished in the phase $t'$. Hence, if two robots are distinguished in some phase $t$, they remain distinguished in each phase $t'>t$.
\end{proof}

The dispersion is eventually achieved by splitting larger groups into smaller groups of robot which is done by processing the bits of the labels of the robots during subroutine $ActiveDisperse$. Hence, we have to make sure that sufficient numbers of $Activedisperse$ must be called during the execution of our algorithm. The following lemma proves the fact that the robots with status other than $activedisperse$ eventually changes their status to $activedisperse$ and hence guarantees the splitting as mentioned above.

\begin{lemma}\label{finite_change_a}
If in some phase, a robot has status $\in \{passive,  jump ,wait\}$, then within a finite number of phases, it changes its status to $activedisperse$.
\end{lemma}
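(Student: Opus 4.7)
My plan is to identify the one-step status transitions admitted by each of the relevant subroutines, show that these transitions force any non-$activedisperse$ trajectory of $M$ into the cycle $passive \to jump \to wait$, and then bound the number of iterations of this cycle by combining Lemma~\ref{lem:robot-possibility} with the observation that $M$ advances along the ring inside a chain of bounded length.

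The first step is to read off the ``exit'' branches from round 17 of the subroutines directly. Algorithm~\ref{wait} always maps $wait \to passive$; Algorithm~\ref{jump} maps $jump \to activedisperse$ when $decrease=false$ (so the robot landed on an empty successor) and to $wait$ otherwise; Algorithm~\ref{passive} maps $passive \to activedisperse$ when $move=0$ (no incoming wave was detected in round 15) and to $jump$ otherwise. Consequently any infinite non-$activedisperse$ trajectory of $M$ must reduce, after at most two initial phases, to the periodic sequence $passive \to jump \to wait \to passive \to \cdots$, with every $passive$-phase receiving an incoming wave and every $jump$-phase landing on an occupied node.

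The second step is a direct trace of one such cycle, showing that $M$ strictly advances along the ring every three phases. Starting at $(v_i, passive)$, rounds 16 and 17 of Algorithm~\ref{passive} take $M$ to $v_{i-1}$ and back to $v_i$ with status $jump$; in the next phase, round 14 of Algorithm~\ref{jump} moves $M$ to $v_{i+1}$, where the $decrease=true$ branch of round 17 sets status $wait$; finally Algorithm~\ref{wait} keeps $M$ at $v_{i+1}$ while converting it to $passive$. Thus one full cycle moves $M$ exactly one node clockwise.

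The final step bounds the number of iterations. By Lemma~\ref{lem:robot-possibility} the occupied nodes of the chain containing $M$ alternate between $passive$ and $\{activedisperse, wait, jump\}$ groups, so the chain has at most $k$ nodes (only $k$ robots exist in total). The chain's front is maintained by its leader, who guarantees via rounds 9--11 of Algorithm~\ref{activedisperse} that the head's successor is empty, and by Observation~\ref{occupy_one} the chain lengthens by at most one node per phase. Because $M$'s position strictly increases in every cycle while the chain has bounded length, within $O(k)$ cycles $M$ must occupy the head of the chain; in the very next $jump$-phase $M$ then lands on the empty node prepared by the leader, obtains $decrease=false$, and exits to $activedisperse$. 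The main obstacle I foresee is ruling out pathological scenarios in which fresh splits at the tail keep re-injecting waves that indefinitely pin $M$ inside the chain; I plan to resolve this by appealing to Lemma~\ref{distinguish_one}, which ensures that each split event permanently separates at least one pair of robots, so the total number of splits over the whole execution is at most $\binom{k}{2}$ and hence the supply of waves is finite.
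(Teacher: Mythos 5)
Your proposal is correct and follows essentially the same route as the paper's own proof: both reduce any non-terminating trajectory to the $passive \to jump \to wait$ cycle via the round-17 exit branches, observe that each such cycle advances the robot one node clockwise (the round-14 move in $Jump$), and conclude that with only $k$ robots the robot must land on an empty node and switch to $activedisperse$ within $O(k)$ cycles. Your extra scaffolding (the leader maintaining an empty head, Observation~\ref{occupy_one}, and the $\binom{k}{2}$ bound on splits) goes beyond what the paper writes down, but it only elaborates the same contradiction argument rather than changing it.
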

\begin{proof}
If possible, a robot $M$ enters into an infinite loop of transitions of statuses as represented below:
\begin{center}
    $passive \longrightarrow jump \longrightarrow wait \longrightarrow passive$
\end{center}
In particular, a robot have $status=passive$ at some phase $t'$. By the execution of subroutine $Passive(M)$, $M$ updates $status=jump$ in round 17 (step 23 of Algorithm~\ref{passive}). In phase $t'+1$, $M$ updates $status=wait$ in round 17 (step 7 of Algorithm~\ref{jump}) by the execution of $Jump(M)$. In phase $t'+2$, $M$ updates $status=passive$ in round 17 (step 2 of Algorithm~\ref{wait}) by the execution of subroutine $Wait(M)$. Subsequently, the robot $M$ enters into a loop of three consecutive phase transitions $passive-jump-wait$.

Suppose $t$ denotes the phase at which the robot $M$ updates $status=passive$ before entering into the loop. Note that $M$ may have $status=activedisperse$ or $status=wait$ at the phase $t-1$. In every three consecutive phases of $passive-jump-wait$ cycle, a robot moves forward at least one step by the execution of $Jump(M)$ in round 14 (step 2 of Algorithm~\ref{jump}). As there are $k<n$ number of robots, after $k$ number of consecutive $passive-jump-wait$ phases, $M$ must find an empty node by the execution of subroutine $Jump(M)$ and update $status=activediaperse$. A contradiction. Hence, if a robot has status $\in \{passive,  jump ,wait\}$ in some phase, then within a finite number of phases, it changes its status to $activedisperse$.
\end{proof}

\begin{lemma}\label{jumpwait_k_times}
A robot, once changes its status to $activedisperse$ for the first time, can have
  status $wait$ in at most $k$ phases and status $jump$ in at most $2k$ phases.
\end{lemma}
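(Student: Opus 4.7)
The plan is to extend the analysis of Lemma \ref{finite_change_a} by tracking $M$'s status across its entire lifetime after its first $activedisperse$ phase. From the pseudocode, $wait$ is only entered from $activedisperse$ (round 17 of Algorithm \ref{activedisperse}, when $M$ split-moves into an occupied node) or from $jump$ (round 17 of Algorithm \ref{jump}), $jump$ is only entered from $passive$, and $wait$ is always immediately followed by $passive$. Thus $M$'s status trace decomposes into maximal \emph{cascades} of the form $passive \to jump \to (wait \to passive \to jump)^{\ast} \to activedisperse$, separated by stretches in which $M$ only toggles between $activedisperse$ and $passive$ without ever entering $jump$ or $wait$.

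Within one cascade, the proof of Lemma \ref{finite_change_a} already gives that every $passive$-$jump$-$wait$ triple advances $M$ by one node, and that because only $k - 1$ other occupied nodes exist in the ring, the cascade must terminate within at most $k$ triples by $M$ jumping onto an empty node and becoming $activedisperse$. Hence a single cascade contains at most $k - 1$ $wait$ phases and at most $k$ $jump$ phases. The next step is to argue that across all cascades the total $wait$ count is at most $k$. For this I would charge each $wait$ phase of $M$ to a distinct merge event at $M$'s current node: by Lemma \ref{lem:robot-possibility}, $M$ enters $wait$ only by moving into a node previously occupied by a $passive$ group, so the resulting $\{wait, jump\}$ multiplicity at that node represents the fusion of two previously distinct chains into one. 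Since there are at most $k$ robots in total, the number of merge events in which $M$ participates as the incoming party is at most $k$, yielding $W \leq k$.

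For the $jump$ bound, I would partition $M$'s $jump$ phases into those immediately followed by $wait$ (which are in bijection with $M$'s $wait$ phases, hence at most $k$) and those immediately followed by $activedisperse$, which are precisely the cascade-terminating jumps. Since Remark \ref{lemma:not-move-back} forbids $M$ from moving backward on its own, each new cascade beyond the first must be initiated by fresh robots arriving at $M$'s $passive$ node, which is again a merge-type event. Bounding the number of cascades by $k$ then gives $J \leq 2k$. The main obstacle will be formalising the charging argument: one must verify carefully, using the alternation invariant of Lemma \ref{lem:robot-possibility} together with Remark \ref{lemma:not-move-back} and the leader-movement rules of rounds 9--12, that merge events are not double-counted across different $wait$ occurrences of $M$ and that no cascade can be re-entered without a fresh global merge being charged, so that the total count charged against $M$ is genuinely at most $k$.
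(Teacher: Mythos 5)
Your overall architecture matches the paper's: bound the number of $wait$ phases by $k$, then obtain $2k$ for $jump$ by splitting the $jump$ phases into two classes of at most $k$ each. The split itself is different, though. You classify jumps by their \emph{outcome} (those followed by $wait$, put in bijection with $wait$ phases, versus the cascade-terminating ones followed by $activedisperse$). The paper classifies them by the \emph{cause} of the preceding $passive$ state: at most $k$ jumps triggered by ordinary split-induced arrivals from $pred(v_i)$, plus at most $k$ extra jumps arising from the leader-arrival events of Remark~\ref{lemma:not-move-back}, each of which forces a fresh $passive$ update that may be followed by a $jump$. Both classifications are exhaustive, and both rely in the end on the same informal ``only $k$ robots, so at most $k$ occurrences'' counting that the paper itself uses, so at that level your route is a legitimate alternative.

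There is, however, a concrete flaw in your charging argument for the $wait$ bound. You charge each $wait$ phase of $M$ to ``the fusion of two previously distinct chains into one,'' but a $wait$ event is \emph{not} a chain fusion. By Lemma~\ref{lem:robot-possibility}, within a single expanding chain the occupied nodes alternate between $\{activedisperse, wait, jump\}$ and $passive$; when $M$'s active group at $v_i$ splits and the forward part lands on the $passive$ group at $succ(v_i)$, both nodes already belong to the same maximal block of consecutive occupied nodes, so no two previously distinct chains are merged -- yet $M$ enters $wait$. The same is true for a $jump$ into an adjacent occupied node. Hence the number of chain fusions (at most $k-1$ globally) does not upper-bound the number of $wait$ phases of $M$, and your claim ``the number of merge events in which $M$ participates as the incoming party is at most $k$'' does not yield $W\le k$. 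The quantity the paper (informally) charges against is different: each time $M$ waits it has advanced onto a node occupied by other robots, and since only $k-1$ other robots exist ahead of $M$ before the first empty node, this can recur at most $k$ times before $M$ lands on an empty node. The same objection applies to your bound on the number of cascades, which you also justify via ``merge-type events.'' To repair the proof you would need to replace the fusion-counting potential with one measured in robots (or occupied nodes) ahead of $M$, which is essentially the argument the paper gives.
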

\begin{proof}
    Let $M$ be a robot on a node $v_i$ with $status=activedisperse$ in some phase $t$. We have the following cases:
    \begin{enumerate}
        \item At most $k$ updates of $status=wait$. $M$ would update $status=wait$ in phase $t$ by the execution of $ActiveDisperse(M)$ (steps 29-30 of Algorithm~\ref{activedisperse}) if a split happened at $v_i$ and $succ(v_i)$ is occupied. If possible, $M$ updates $status=wait$ for more than $k$ phases. As there are total $k<n$ number of robots, after at most $k$ phases $M$ is guaranteed to find an empty node at which $M$ does not require to wait for other robots to move forward (do not need to update status to $wait$). A contradiction. Thus, a robot, once changes its status to $activedisperse$ for the first time, can have
       $status=wait$ in at most $k$ phases. 
        \item At most $k$ updates of $status=jump$. $M$ would update $status=passive$ in phase $t$ by the execution of $ActiveDisperse(M)$ if a split happened at $v_i$ and $M$ remained at $v_i$. In phase $t+1$, $M$ would update $status=jump$ by the execution of subroutine $Passive(M)$ in round 17 (step 23 of Algorithm~\ref{passive}) due to the incoming of robots from $pred(v_i)$. If possible, $M$ updates $status=jump$ for more than $k$ phases. As there are total $k<n$ number of robots, after at most $k$ phases it is guaranteed that there would not be any incoming robots to the current node of $M$. Thus, $M$ does not require to move forward to create space for other robots (do not need to update status to $jump$). A contradiction. Thus, a robot, once changes its status to $activedisperse$ for the first time, can have
       $status=jump$ in at most $k$ phases. 
    \end{enumerate}
In addition, in the light of Remark \ref{lemma:not-move-back}, a robot with status passive can change its status to passive at most $k$ times. Also, a robot with status $passive$ may change  its status to $jump$
 in the next phase. Hence, the total number of phases with status $jump$ is at most $2k$.   
\end{proof}

Now onwards, our technical results will converge in proving the final result that proves that dispersion is achieved and it is achieved in $O(MaxSize+k)$ rounds.

\begin{lemma}\label{distinguish_two}
    If a robot $M_1\in G$ in some chain $C$ has updated $status=activedisperse$ for $MaxSize$ number of phases, then for any other robot $M_2 \in G$, $M_1$ and $M_2$ must be distinguished.  
\end{lemma}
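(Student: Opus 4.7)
The plan is to argue by contradiction. Suppose, against the statement of the lemma, that $M_1$ and $M_2$ are not distinguished at the phase $t^\star$ immediately after $M_1$'s $MaxSize$-th $activedisperse$ phase. By (the contrapositive of) Lemma~\ref{distinguish_one}, two robots that are not distinguished at phase $t^\star$ must also have been not distinguished in every earlier phase. Unpacking the definition, this means that in every phase up to $t^\star$ the robots $M_1$ and $M_2$ sit on the same node and carry identical status variables. In particular, $M_2$ must have held status $activedisperse$ in exactly the same $MaxSize$ phases as $M_1$.

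In each such phase, the crucial event happens in round 13 of $ActiveDisperse$: the current least-significant unprocessed bit of $l(M)$ is examined, and the robot moves through port 1 if that bit is $1$ and stays otherwise. If $M_1$ and $M_2$ disagreed on that bit, exactly one of them would move while the other remained; by the case analysis already established for Lemma~\ref{lem:robot-possibility} the phase would end with the two robots either sitting on different nodes or, if co-located, bearing different statuses (one in $\{activedisperse,jump,wait\}$ and the other in $passive$). Either way they would be distinguished from that phase onward, contradicting the standing assumption via Lemma~\ref{distinguish_one}. Hence $M_1$ and $M_2$ must agree on the bit processed in every one of these $MaxSize$ phases.

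Since the binary representation of each label has length exactly $MaxSize$ and one fresh bit is consumed per $activedisperse$ phase, after $MaxSize$ such phases every bit of $l(M_1)$ has been matched to the corresponding bit of $l(M_2)$. Consequently $l(M_1)=l(M_2)$, contradicting the uniqueness of robot IDs, and the lemma follows. The main technical obstacle I anticipate is the second step, namely verifying that a single-bit disagreement in round 13 really does propagate into a visible split (i.e., into different positions or different statuses) by the end of the same phase under every possible local configuration; this is precisely what the case analysis of Lemma~\ref{lem:robot-possibility} already handles, so I would invoke it rather than redo the case split here.
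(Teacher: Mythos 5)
Your proposal is correct and follows essentially the same route as the paper: a proof by contradiction that combines the uniqueness of labels, the fact that a bit disagreement processed in an $activedisperse$ phase forces a split (and hence distinguishes the two robots), and Lemma~\ref{distinguish_one} to propagate distinguishedness forward. Your write-up is somewhat more explicit than the paper's in justifying that $M_1$ and $M_2$ process the same bit in the same phases (via the contrapositive of Lemma~\ref{distinguish_one}), but this is a refinement of detail rather than a different argument.
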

\begin{proof}
    If possible, let $M_1$ and $M_2$ remain co-located, and both have the same status after $MaxSize$ number of phases. As all the robots have unique IDs, $\exists j,\; 1\leq j\leq MaxSize$ such that the robots $M_1$ and $M_2$ have distinct $j^{th}$ bit. Consequently, they separated by processing the $j^{th}$ bit at some phase, say $t$. In phase $t+1$, $M_1$ and $M_2$ would not be co-located. Thus, they would become distinguished in phase $t+1$. By Lemma~\ref{distinguish_one}, they would remain distinguished for all phases $t'>t$. A contradiction. Hence, if a robot $M_1\in G$ in some chain $C$ has updated $status=activedisperse$ for $MaxSize$ number of phases, then for any other robot $M_2 \in G$, $M_1$ and $M_2$ must be distinguished.
\end{proof}

\begin{lemma}\label{passive_finite}
A robot can have passive status for at most $MaxSize+2k$ phases.
\end{lemma}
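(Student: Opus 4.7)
The plan is to classify every phase in which the robot $M$ has $status=passive$ according to how $M$ entered that status at the end of the previous phase, and bound each class separately. By inspecting the five subroutines, the only ways $M$ can become $passive$ at the end of a phase are: (i) from $activedisperse$ in round 15 of Algorithm~\ref{activedisperse}, either because no split occurred at $M$'s node or because a split occurred but $M$ stayed (or moved forward with bit $1$, discovered no actual split, and returned); (ii) from $wait$ in round 17 of Algorithm~\ref{wait}; (iii) from $activedisperse$ or $passive$ in round 12 when a leader of a previous chain arrives, as in Remark~\ref{lemma:not-move-back}.

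Transitions of type (iii) are bounded by $k$ directly via Remark~\ref{lemma:not-move-back}, since each such transition is charged to the arrival of a leader from a distinct previous chain of groups and there are at most $k$ robots, hence at most $k$ such chains. Transitions of type (ii) are bounded by $k$ as well: by Lemma~\ref{jumpwait_k_times}, $M$ can have status $wait$ in at most $k$ phases, and each $wait$ phase is followed by exactly one $passive$ phase via the single line of Algorithm~\ref{wait}.

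Transitions of type (i) are bounded by $MaxSize$. Each such transition happens at the end of an $activedisperse$ phase in which $M$ processed one bit of $l(M)$ at round 13. By Lemma~\ref{distinguish_two}, once $M$ has updated $status=activedisperse$ for $MaxSize$ phases, $M$ is distinguished from every other robot of its merged group; combined with Lemma~\ref{distinguish_one}, which ensures that distinguishedness is preserved thereafter, $M$ is alone on its node in every subsequent $activedisperse$ phase. When $M$ is alone, round 13 of Algorithm~\ref{activedisperse} only updates $start$ (then $settle$) without moving, so no split can occur at its node, and by round 19 of a later phase $M$ updates $status=idle$; no further type~(i) passive transition can be produced. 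Summing the three bounds yields at most $MaxSize + 2k$ passive phases, as claimed.

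The main obstacle is to verify that the type~(i) count lines up cleanly with the counter in Lemma~\ref{distinguish_two}. Between two consecutive type~(i) transitions $M$ may drift through $passive$, $jump$, and $wait$, and a leader from an earlier chain may push $M$ back to a predecessor node (a type~(iii) event). One has to argue that these intermediate detours do not spuriously inflate the $activedisperse$ counter of Lemma~\ref{distinguish_two} and that, after $M$ becomes distinguished, it cannot be re-absorbed into an undistinguished group (using Lemma~\ref{distinguish_one} and the alternation given by Lemma~\ref{lem:robot-possibility}). Once these technicalities are settled, the classification together with the three bounds completes the proof.
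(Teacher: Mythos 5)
Your proof is correct and follows essentially the same route as the paper: the same three-way classification of how a robot enters the $passive$ status (from $activedisperse$, from $wait$ via Lemma~\ref{jumpwait_k_times}, and from a leader arrival via Remark~\ref{lemma:not-move-back}) with the same respective bounds $MaxSize$, $k$, and $k$. Your justification of the $MaxSize$ bound through Lemma~\ref{distinguish_two} is somewhat more explicit than the paper's one-line assertion, but it is not a different argument.
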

\begin{proof}
    A robot would update $status=passive$ by executing $ActiveDisperse(M)$ or $Wait(M)$. We have the following cases:
    \begin{enumerate}
        \item Execution of subroutine $ActiveDisperse(M)$. A robot $M$ with $status=activedisperse$ would update $status=passive$ by the execution of subroutine $ActiveDisperse(M)$ for the following cases: (a) a split happened, and $M$ remained at the current node, or (b) no split happened. As at most $MaxSize$ times $M$ would execute the subroutine $ActiveDisperse$, there can be at most $MaxSize$ number of $passive$ updates by the execution of subroutine $ActiveDisperse(M)$ (steps 23-26 of Algorithm~\ref{activedisperse}).
        \item Execution of subroutine $Wait(M)$. By Lemma~\ref{jumpwait_k_times}, it follows that there can be at most $k$ number of $status=wait$ updates and subsequently at most $k$ many $status=passive$ updates by the execution of $Wait(M)$ in round 17 (step 2 of Algorithm~\ref{wait}). 
    \end{enumerate}
    In addition, in the light of Remark \ref{lemma:not-move-back}, a robot with status $passive$ or $activedisperse$ can change its status to passive at most $k$ times. Hence, a robot can have $passive$ status for at most $MaxSize+2k$ phases.
\end{proof}

The final result is stated in the following theorem that proves that dispersion is achieved.

\begin{theorem}
Within $O(MaxSize+k)$ rounds after the start of $AlgorithmMultiStart$, no two robots on the ring are co-located.
\end{theorem}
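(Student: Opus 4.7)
The plan is to aggregate the per-status phase bounds established in the preceding lemmas. Each phase of $AlgorithmMultiStart$ consists of exactly $19$ synchronous rounds, so to prove the theorem it suffices to show that every robot reaches status $idle$ within $O(MaxSize+k)$ phases after the start.

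First, I would fix an arbitrary robot $M$ and partition the phases $M$ participates in according to its status at the start of the phase. By Lemmas \ref{leader_election_a} and \ref{leader_election_b}, $M$ spends exactly $MaxSize$ phases in status $leaderelection$, and by Lemma \ref{merge-complete} together with Lemma \ref{merge-complete_b} it then spends at most $O(k)$ phases in status $activemerge$; during both intervals robots from distinct chains never collide, and Lemma \ref{merging-chains} further guarantees that once a chain has finished merging and entered $activedisperse$, it never overtakes a chain still in $activemerge$. Thus after $MaxSize + O(k)$ phases, $M$ has status in $\{activedisperse, passive, wait, jump\}$.

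Next, I would bound the remaining statuses by combining the dispersal-phase lemmas. By Lemma \ref{distinguish_two}, once $M$ has executed $activedisperse$ for $MaxSize$ phases it is distinguished from every other robot originally sharing its chain, and by Lemma \ref{distinguish_one} once distinguished it remains so forever. Lemma \ref{lem:robot-possibility} forces the alternating pattern $passive / \{activedisperse, wait, jump\}$ on consecutive occupied nodes, so no two distinguished robots end up sharing both node and status again; consequently the settling mechanism in Rounds $13$--$19$ of $ActiveDisperse$ sets $settle = 1$ within a constant number of further $activedisperse$ phases, after which $M$ enters $idle$ in at most one additional phase. Applying Lemmas \ref{jumpwait_k_times}, \ref{passive_finite}, and Remark \ref{lemma:not-move-back}, the total number of phases $M$ ever spends in $wait$, $jump$ and $passive$ is $O(k)$, $O(k)$ and $O(MaxSize+k)$ respectively. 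Summing,
\[
O(MaxSize) + O(k) + O(MaxSize) + O(MaxSize+k) + O(k) + O(k) \;=\; O(MaxSize+k),
\]
and multiplying by the constant phase length $19$ gives the claimed $O(MaxSize+k)$ round bound; at that time every robot is $idle$ and hence alone on its node.

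The main obstacle will be verifying the final transition step cleanly: I need to argue that once $M$ becomes distinguished from every other robot in its chain (after $MaxSize$ phases of $activedisperse$), the settling flags $start$ and $settle$ actually fire within $O(k)$ subsequent phases rather than being repeatedly reset by splits and incoming leaders from neighbouring chains. This is precisely where Remark \ref{lemma:not-move-back} is invoked to cap the number of $passive$ re-entries caused by arriving leaders at $O(k)$, absorbing the cross-chain interference into the $k$ term; without this cap the apparent sum would blow up by a factor proportional to the number of chains.
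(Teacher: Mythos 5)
Your proposal is correct and follows essentially the same route as the paper: the paper's proof is likewise a aggregation of the per-status bounds from Lemmas \ref{leader_election_b}, \ref{merge-complete}, \ref{merge-complete_b}, \ref{merging-chains}, \ref{lem:robot-possibility}, \ref{distinguish_one}, \ref{finite_change_a}, \ref{distinguish_two}, \ref{jumpwait_k_times}, \ref{passive_finite} and Remark \ref{lemma:not-move-back}, concluding $O(MaxSize+k)$ rounds. Your version is somewhat more explicit than the paper's (which simply lists the lemmas and asserts the bound) in that you partition the phases of a fixed robot by status and sum the contributions, and you at least flag the final $settle$/$idle$ transition as the step needing care --- a point the paper's own proof passes over silently.
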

\begin{proof}
Initially, all the robots have $status=leaderelection$. From Lemma~\ref{leader_election_b}, it follows that within $MaxSize$ rounds, a unique leader would be selected for each chain of groups. Lemma~\ref{merge-complete} ensures that a chain of length $p$ would be merged within $O(p)$ rounds. Lemma~\ref{merge-complete_b} ensures that no two robots from different chains would be co-located during the execution of $ActiveMerge$ and $LeaderElection$. By Observation 6, it follows that if a robot updates $status=activedisperse$ from $status=activemerge$, then for all the successive phases, the robot would never update its status to $activemerge$. From Lemma~\ref{merging-chains}, it follows that if two robots from different chains become adjacent, none of them can have $status=activemerge$. From Lemma~\ref{lem:robot-possibility}, robots with $status=passive$ would be guaranteed to be present in alternate nodes. Lemma~\ref{distinguish_one} ensures that once two robots become distinguished, they will remain distinguished for all the successive nodes. Lemma~\ref{finite_change_a} guarantees that if in some phase, a robot has status $\in \{passive, jump, wait\}$, then within a finite number of phases, it changes its status to $activedisperse$. It is guaranteed by Lemma~\ref{distinguish_two} that if a robot $M_1\in G$ in some chain $C$ has updated $status=activedisperse$ for $MaxSize$ number of phases, then for any other robot $M_2 \in G$, $M_1$ and $M_2$ must be distinguished. By the Lemma~\ref{jumpwait_k_times}, it follows that if a robot once changes its status to $activedisperse$ for the first time, it can have $status=wait$ in at most $k$ phases and $status=jump$ in at most $2k$ phases. Lemma~\ref{passive_finite} shows that a robot can have $status =passive$ for at most $MaxSize+2k$ phases. Hence, within $O(MaxSize+k)$ rounds after the start of $AlgorithmMultiStart$, no two robots on the ring are co-located. Since $MaxSize=\lfloor\log L\rfloor+1$, within $O(\log L+k)$ rounds after the start of $AlgorithmMultiStart$, no two robots on the ring will be co-located.
\end{proof}

\subsection{Lower Bound}
In this section, we show a lower bound $\Omega(k+\log L)$ on time for the dispersion of $k$ robots starting from one node of a cycle. Since there are $k$ robots, at least one must reach a node at a distance at least $\lceil\frac{k}{2}\rceil$ from the starting node. Since a robot can travel at most one edge in a round, at least $\lceil\frac{k}{2}\rceil$ many rounds are needed before dispersion is achieved.
This proves the lower bound $\Omega(k)$ on time of dispersion. 
Therefore, showing a lower bound $\Omega(\log L)$ is enough. The following lemma shows a lower bound $\Omega(\log L)$ on time of dispersion.

\begin{lemma}
Any algorithm that solves dispersion on a ring must take $\Omega(\log L)$ rounds.
\label{lem4}\end{lemma}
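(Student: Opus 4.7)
The plan is to reduce to the case $k=2$ and apply a pigeonhole argument over the space of deterministic trajectories available to any algorithm. Consider two robots $M_a, M_b$ with distinct IDs $a, b \in [0, L]$ placed at a common source node $v$ of a ring of size $n > T$, where $T$ is the claimed running time of a dispersion algorithm. As long as the two robots remain co-located on a single node, their local observations at the start of every round are identical and in fact constant: \emph{alone}${}={}$\emph{false} (two robots are together), while \emph{increase}${}={}$\emph{decrease}${}={}$\emph{false} (there are no other robots on the ring to arrive or depart, and whenever both robots move together or both stay together no multiplicity change is detected).

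Because the algorithm is deterministic and the observation history is fixed while the two robots are together, each robot's action in round $t$ depends only on its ID; I would write this function as $g_t : [0, L] \to \{\text{stay}, \text{port 0}, \text{port 1}\}$. The two robots separate in round $t$ if and only if $g_t(a) \neq g_t(b)$, so if $g_t(a) = g_t(b)$ for every $t \in \{1, \ldots, T\}$ then the robots remain co-located throughout the execution and dispersion is not achieved within $T$ rounds.

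The set of distinct trajectories $(g_1(\cdot), g_2(\cdot), \ldots, g_T(\cdot))$, restricted to a single ID, has cardinality at most $3^T$. By pigeonhole, if $L + 1 > 3^T$ there exist distinct $a, b \in [0, L]$ inducing identical trajectories; the adversary picks these two IDs as the input, and the algorithm cannot separate them in $T$ rounds. Hence any dispersion algorithm must satisfy $T \geq \log_3(L + 1) = \Omega(\log L)$.

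The main subtlety is justifying that neither robot can detect any local change while they remain together. This rests on the fact that \emph{alone}, \emph{increase}, and \emph{decrease} are properties of the node (not the individual robot), combined with the observations that no external robot exists to trigger an arrival/departure event and that the two robots always act in synchrony while together, so no multiplicity change is ever perceived. A secondary concern is wraparound of the ring, which is ruled out by taking $n > T$ in the adversarial instance.
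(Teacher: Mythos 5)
Your proof is correct and follows essentially the same route as the paper: a pigeonhole argument comparing the at most $3^T$ possible deterministic movement sequences against the $L+1$ possible labels, concluding that two labels must induce identical trajectories and hence never separate. The only difference is cosmetic care — you explicitly justify that co-located robots receive identical, constant observations (so the round-$t$ action is a function of the ID alone) and rule out ring wraparound by taking $n>T$, both of which the paper leaves implicit.
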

\begin{proof}
We prove this lemma by contradiction.
Suppose that there exists an algorithm $\mathcal{A}$ that solves the problem of dispersion in at most $y$ where $y \le \frac{\log L}{2}$ rounds. At any time $t>0$, a robot has three possible choices according to ${\cal A}$: (i) move clockwise from the current node, (ii) move counter-clockwise from the current node, and (iii) stay at the current node. Define a vector $X_i=(x_i^1,x_i^2, \ldots, x_i^{y})$ for the robot $M_i$ such that,

\centerline{$
x_i^j=\begin{cases}
			-1, & \text{if the robot $M_i$ moves clockwise at round $j$}\\
            1, & \text{if the robot $M_i$ moves counter-clockwise at round $j$}\\
            0, & \text{if the robot $M_i$ does not move at round $j$}
		 \end{cases}
$}
 
As all the robots are initially placed at the same node of the ring, if for any two robots $M_i$ and $M_j$, $X_i=X_j$, then these robots stay co-located even after $y$ rounds. There are at most  $3^{\frac{\log L}{2}} < L$ different possible vectors of length $y$. Also, there are at most $L$ different possible labels the set of robots may receive initially. Therefore, by Pigeon hole principle, there exist two integers $L_1, L_2 \le L$ such that if the two robots have labels $L_1$ and $L_2$, the respective movement vectors are the same. This shows that even after round $y$, the robots with labels $L_1$ and $L_2$ remain co-located. This contradicts the fact that dispersion can be solved within $y$ rounds.

\end{proof}

From Lemma \ref{lem4}, along with the lower bound $\Omega(k)$, we summarize the desired result in the following theorem.

\begin{theorem}
Any algorithm that solves dispersion on a ring must take $\Omega(\log L+k)$ rounds.
\end{theorem}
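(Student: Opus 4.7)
The plan is to combine the two lower bounds already in hand: the $\Omega(\log L)$ bound from Lemma~\ref{lem4}, and the $\Omega(k)$ bound obtained from the simple distance argument sketched just above the statement. Since the theorem quantifies over any algorithm and any initial configuration, I would fix the hardest case for the lower bound, namely the scenario where all $k$ robots start at a single node of the ring. This is precisely the setting used in Lemma~\ref{lem4}, and it also makes the distance argument clean.

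First I would formalize the $\Omega(k)$ part. With all $k$ robots co-located at a source node $s$, in a dispersed final configuration every robot occupies a distinct node, so at least one robot must end at graph-distance $\geq \lceil k/2 \rceil$ from $s$ (since the two arcs of the ring starting at $s$ together cover at most $k$ distinct nodes only if one of them extends at least $\lceil k/2 \rceil$ edges from $s$). Because each robot traverses at most one edge per round, this forces at least $\lceil k/2 \rceil$ rounds, giving an $\Omega(k)$ lower bound. This argument is independent of the robots' labels and of the algorithm, so it applies uniformly.

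Next, I would invoke Lemma~\ref{lem4} on the same single-source configuration, which gives an $\Omega(\log L)$ lower bound on the number of rounds. Since any algorithm must simultaneously satisfy both bounds, the running time $T$ on this instance satisfies $T \geq c_1 k$ and $T \geq c_2 \log L$ for some positive constants $c_1, c_2$. Using the standard inequality $\max(a,b) \geq \tfrac{1}{2}(a+b)$ for nonnegative $a,b$, I get
\[
T \;\geq\; \max(c_1 k,\, c_2 \log L) \;\geq\; \tfrac{1}{2}\bigl(\min(c_1,c_2)\bigr)(k + \log L),
\]
which is $\Omega(\log L + k)$, establishing the theorem.

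I do not anticipate any real obstacle here: the only subtlety is making sure the two lower bounds apply to the \emph{same} adversarial instance, which is handled by taking the single-source configuration with $k$ co-located robots and letting the adversary pick labels from $[0,L]$ as in the proof of Lemma~\ref{lem4}. No further assumptions about communication, memory, or the synchronous model are needed beyond what is already in force throughout the paper.
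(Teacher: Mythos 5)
Your proposal is correct and follows essentially the same route as the paper: the $\Omega(k)$ bound via the distance argument for $k$ co-located robots and the $\Omega(\log L)$ bound from Lemma~\ref{lem4}, combined via the maximum of the two. The only additions are cosmetic (explicitly fixing a single adversarial instance and spelling out $\max(a,b)\geq\tfrac{1}{2}(a+b)$), which the paper leaves implicit.
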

\section{Conclusions}
We have presented a deterministic solution for the dispersion problem in an oriented ring, allowing multiple initial points. Our proposed algorithm solves the dispersion problem within $O(\log L+k)$ rounds. We have presented a lower bound $\Omega(\log L+k)$ on time for the dispersion of $k$ robots on a ring. The assumption of a common orientation has a significant impact on the results. It would be interesting to study the problem in an unoriented ring. Another assumption is about the knowledge of $L$, which has helped design our proposed algorithm. An algorithm without the assumption of knowledge of $L$ can be explored in the future. The dispersion problem with multiple sources can be investigated for other kinds of graphs, namely cactus, grids, torus, and general graphs.

\bibliographystyle{plainurl}
\bibliography{biblio}

\end{document}